%% file: main.tex
\documentclass[12pt]{article}

\input{packages}

\input{macros}

\begin{document}
\pagenumbering{gobble} 

\input{title}

\newpage
\pagenumbering{arabic}  

\input{introduction}
\input{preliminaries}
\input{Evaluations_that_Retain_Irreducibility}
\input{HSG_with_polynomial_evaluations}
\input{construction}
\input{reduction}

\bibliographystyle{alpha}
\bibliography{bib}
\end{document}

%% file: packages.tex
\usepackage[utf8]{inputenc}
\usepackage[english]{babel}

\usepackage[a4paper, total={6.3in, 8.9in}]{geometry}

\usepackage{tocloft}
\usepackage{mathptmx} 

\usepackage{verbatim}
\usepackage{enumitem}
\usepackage{array,multirow}
\usepackage{enumitem}
\usepackage{amssymb,amsfonts,amsmath,amsthm}
\usepackage{aliascnt} 
\usepackage{url}
\usepackage{tikz}
\usepackage{bbm}
\usepackage{accents}
\usepackage{thmtools}

\usetikzlibrary{arrows.meta}
\usetikzlibrary{calc}

\usepackage{xcolor}
\usepackage{ctable}
\usepackage{tikz}
\usetikzlibrary{matrix,arrows}
\usetikzlibrary{decorations.markings}

\definecolor{citec}{HTML}{324FDA} 
\definecolor{linkc}{HTML}{A0111A}
\definecolor{urlc}{HTML}{b55c87}
\usepackage[colorlinks,
    citecolor=citec,
    linkcolor=linkc,
    urlcolor=urlc,
    bookmarks = true,
    breaklinks,
]{hyperref}
\usepackage[nameinlink, noabbrev, capitalize]{cleveref}

%% file: macros.tex
\theoremstyle{plain}
\newtheorem{theorem}{Theorem}[section]

\newtheorem{lemma}[theorem]{Lemma}
\newtheorem{corollary}[theorem]{Corollary}
\newtheorem{proposition}[theorem]{Proposition}
\newtheorem{claim}[theorem]{Claim}
\newtheorem{definition}[theorem]{Definition}

\newtheorem{remark}[theorem]{Remark}
\newtheorem{algorithm}[theorem]{Algorithm}

\numberwithin{equation}{section}

\newcommand{\eps}{\varepsilon}
\newcommand{\Tr}{\operatorname{Tr}}

\def\poly{{\mathrm{poly}}}

\DeclareMathOperator{\Syl}{Syl}

\renewcommand{\Pr}{\mathop{\bf Pr\/}}

\newcommand{\FF}{{\mathbb{F}}}
\newcommand{\KK}{{\mathbb{K}}}
\newcommand{\EE}{{\mathbb{E}}}

\newcommand{\coeff}[1]{\mathrm{coeff}\left(#1\right)}

\newcommand{\Res}[1]{\mathrm{Res}\left(#1\right)}

%% file: title.tex
\title{Optimal PRGs for Low-Degree Polynomials \\ over Polynomial-Size Fields}

\date{}

\author{
  Gil Cohen\thanks{Tel Aviv University. \texttt{gil@tauex.tau.ac.il}. Supported by ERC starting grant 949499 and by the Israel Science Foundation grant 2989/24.  	
  }\and
  Dean Doron\thanks{Ben Gurion University. \texttt{deand@bgu.ac.il}. Supported in part by NSF-BSF grant 2022644.}
  \and
  Noam Goldgraber\thanks{Ben Gurion University and Tel Aviv University. \texttt{goldgrab@post.bgu.ac.il}. Supported by NSF-BSF grant 2022644.}
}

\maketitle

\input{abstract}

%% file: abstract.tex
\begin{abstract}
Pseudorandom generators (PRGs) for low-degree polynomials are a central object in
pseudorandomness, with applications to circuit lower bounds and derandomization. Viola’s celebrated construction~\cite{Vio09} gives a PRG over the binary
field, but with seed length exponential in the degree~$d$. This exponential dependence can
be avoided over sufficiently large fields. In particular, Dwivedi, Guo, and
Volk~\cite{dwivedi2024optimalpseudorandomgeneratorslowdegree} constructed PRGs with optimal
seed length over fields of size exponential in~$d$. The latter builds on the framework of
Derksen and Viola~\cite{DV22}, who obtained optimal-seed constructions over fields of size
polynomial in~$d$, although growing with the number of variables~$n$.

In this work, we construct the first PRG with \emph{optimal seed length} for degree-$d$
polynomials over \emph{fields of polynomial size}, specifically $q \approx d^4$, assuming, as
in~\cite{dwivedi2024optimalpseudorandomgeneratorslowdegree}, sufficiently large
characteristic. Our construction follows the framework of
\cite{DV22,dwivedi2024optimalpseudorandomgeneratorslowdegree} and reduces the required
field size by replacing the hitting-set generator used in prior work with a new
pseudorandom object.

We also observe a threshold phenomenon in the field-size dependence.
Specifically, we prove that constructing PRGs over fields of sublinear size, for example
$q = d^{0.99}$ where $q$ is a power of two, would already yield PRGs for the binary field
with comparable seed length via our reduction, provided that the construction imposes no
restriction on the characteristic. While a breakdown of existing techniques has been noted before, we prove that this
phenomenon is inherent to the problem itself, irrespective of the technique used.

\end{abstract}

%% file: introduction.tex
\section{Introduction}

A \emph{pseudorandom generator} (PRG) fooling a class of functions $\mathcal{C} \subseteq \Sigma^n \rightarrow \Sigma$ is a map $G \colon \{0,1 \}^s \rightarrow \Sigma^n$ that stretches a uniform seed of $s \ll n$ bits into strings in $\Sigma^n$, such that the distribution of $G$ \emph{fools} any function $f \in \mathcal{C}$, in the sense that $f(G(\mathbf{U}_s))$ is close, in total-variation distance, to $f(\mathbf{U}_{\Sigma^n})$.\footnote{Here and throughout, for an integer $s$, $\mathbf{U}_s$ denoted the uniform distribution over $\{0,1\}^s$, and for a set $A$, $\mathbf{U}_A$ denotes the uniform distribution over $A$.} Constructing explicit PRGs with short seed for various function classes $\mathcal{C}$ is central to theoretical computer science, with various applications in complexity theory (prominently derandomization and circuit lower bounds), cryptography, and algorithm design.

A fundamental and well-studied class of functions is that of \emph{low degree polynomials}.

\begin{definition}
    We say that $G \colon \{0,1\}^s \rightarrow \FF_q^n$ is a PRG for $n$-variate polynomials of total degree at most $d$ over a finite field $\mathbb{F}_q$ with error $\varepsilon$ if for every such polynomial $f$, the distributions $f(G(\mathbf{U}_{s}))$ and $f(\mathbf{U}_{\mathbb{F}_q^n})$ are $\varepsilon$-close in total variation distance. That is,
    \[
    \frac{1}{2} 
    \sum_{a \in \mathbb{F}_q}
    \left|
    \Pr_{\mathbf{x} \in \mathbb{F}_q^n} [f(\mathbf{x})=a] - \Pr_{t \in \{0,1\}^s} [f(G(t))=a]
    \right| \leq \varepsilon.
    \]
    The \emph{seed length} of $G$ is $s$, and we say that $G$ is \emph{explicit} if for any $t \in \{0,1\}^s$, $G(t)$ can be computed in time $\operatorname{poly}(n,d,\log q, \log 1/\eps)$. 
\end{definition}
PRGs for low-degree polynomials have been extensively studied for more than three decades,
with the natural goal of \emph{minimizing the seed length}. Moreover, over the years it has
become apparent that constructing PRGs over small fields—of constant size, independent of
the degree~$d$—is more challenging than in the regime where the field size is allowed to be
polynomial in~$d$, which permits the use of deep results such as the Weil bound.

Already the case $d=1$, which corresponds to \emph{small-biased generators}, is extremely interesting and has found numerous applications in pseudorandomness and derandomization. In this setting, we have constructions with seed length that is optimal up to constant factors (see, e.g., \cite{NN90,ABNNR92,AGHP92,AMN98,BT13,T17,CC25}), typically over any field size (although the $\FF_2$ case is the most widely studied).
For arbitrary $d$, one can show a lower bound of $
s = \Omega\bigl(d\log(n/d) + \log(1/\eps) + \log q\bigr)
$ (see, e.g., \cite{ABK08}),
and the probabilistic method guarantees the existence of a construction achieving these parameters. From now on, we refer to this as an optimal seed (ignoring constant factors).
Obtaining explicit constructions is more challenging, and prior work has developed along two main strands. 

\paragraph{PRGs over an arbitrary field.} Over any finite field, and in particular over (what turned out to be) the most challenging case of $\FF_2$, a sequence of works \cite{LVW93,Vio07,BV10,Lov09,Vio09} culminated in Viola's celebrated explicit PRG with seed length $O(d\log n + d \cdot 2^{d}\log(q/\eps))$ \cite{Vio09}. The generators of \cite{BV10,Lov09,Vio09} are obtained via the Bogdanov--Viola \cite{BV10} framework: In order to fool degree-$d$ polynomials, sum $\ell=\ell(d)$ independent copies of a small-bias generator. Viola \cite{Vio09} proved that $\ell(d)=d$ suffices, however the error of the small-bias PRG needs to be very small, namely $\eps^{2^d}$ for a designated error $\eps$, leading to the $2^d$ factor in the seed length. Note that when $d=\Omega(\log n)$, the seed length becomes trivial, and indeed, achieving any nontrivial PRGs over $\FF_2$ for degrees greater than $\log n$ would yield breakthroughs in circuit complexity, via the Razborov--Smolensky connection between constant-depth circuits and low-degree polynomials \cite{Raz87,Smo93}.

\paragraph{PRGs over large fields.} When $q \gg d$, better results are known, and we can handle much larger degrees with a relatively short seed length. Bogdanov \cite{Bog05} introduced a technique for constructing PRGs from the weaker object of \emph{hitting set generators} (HSGs; see \cref{def:hsg} for the formal definition). This approach is based on reducing the number of variables of the polynomial, while preserving the factorization structure of it (i.e., preserving irreducibility of its factors). Combined with subsequent improvements in HSG constructions following due to Lu \cite{Lu12}, Cohen and Ta-Shma \cite{CT13}, and Guruswami and Xing \cite{GXHittingSets}, Bogdanov's technique yields a PRG with seed length $O(d^{4}\log n + \log q)$, provided that $q \ge d^{6}/\eps^2$. 

More recently, Derksen and Viola~\cite{DV22} introduced a powerful new approach based on techniques from algebraic geometry and invariant theory.
For sufficiently large $q \ge (d^{4}n^{0.001})/\eps^2$, they achieve optimal seed length $O(d\log(dn)+\log q)$. For $q \ge (d\log n)^{4}/\eps^2$, they obtain a sub-optimal seed length of $O(d \log n \cdot \log(d \log n)+\log q)$. We will discuss their approach, based on the preservation of indecomposability instead of irreducibility, in \cref{sec:overview}. 

Recently, Dwivedi, Guo, and Volk \cite{dwivedi2024optimalpseudorandomgeneratorslowdegree} were able to remove the dependence on $n$ in the field-size requirement needed to obtain an optimal-seed PRG, albeit with an exponential dependence on $d$. Specifically, they achieve seed length $O(d\log n + \log q)$ whenever $q \ge d2^{d}/\eps + d^{4}/\eps^2$ and the field characteristic is $\Omega(d^2)$.\footnote{Interestingly, if one only cares about fooling \emph{primes} degrees, \cite{dwivedi2024optimalpseudorandomgeneratorslowdegree} show that $q = \Omega(d^4/\eps^2)$ suffices.} We also discuss their technique, which combines ideas from \cite{DV22} with a derandomization approach inspired by \cite{Bog05}, in \cref{sec:overview}.

\subsection{Our Result}\label{sec:overvoew}

In our work, we construct an explicit PRG with \emph{optimal seed length} for field sizes $q$ that are only \emph{polynomial in $d$} -- an exponential improvement over field size requirement in \cite{dwivedi2024optimalpseudorandomgeneratorslowdegree}.

\begin{theorem}[see also \cref{thm:main}]\label{thm:main:intro}
For every $n,d \in \mathbb{N}$, a prime power $q$, and $\eps > 0$, satisfying  
$q = \Omega((d\log d)^4/\eps^2)$ and $\operatorname{char}(\FF_q) = \Omega(d^2)$,
there exists an explicit PRG $G \colon \{ 0,1 \}^s \rightarrow \FF_q^n$ for $n$-variate polynomials of degree at most $d$ over $\FF_q$ with error $\eps$ and seed length $s = O(d\log n + \log q)$.
\end{theorem}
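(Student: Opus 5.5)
The proof will follow the Derksen--Viola / Dwivedi--Guo--Volk framework. The generator has the form $G(t) = \sum_{i=1}^{d+1} H(y_i)\cdot$ (or, more precisely, $x = A(z) + \mathcal{H}(\sigma)$, a low-dimensional affine evaluation). Here a short seed first selects an element of a hitting-set-like family of maps $\phi\colon \FF_q^{m}\to\FF_q^n$ with $m=O(1)$ (two or three fresh variables). It then outputs $\phi$ evaluated at a uniformly random point of $\FF_q^{m}$, which is sampled with $O(\log q)$ bits. The analysis has two parts.

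The first part is the reduction. For a degree-$d$ polynomial $f$, suppose the restricted polynomial $f\circ\phi$ remains \emph{indecomposable}, i.e.\ not of the form $g(h)$ with $\deg g\ge 2$. Then $f\circ\phi$, viewed as a polynomial in the fresh variables, has all fibres $f\circ\phi - a$ with few absolutely irreducible factors, apart from a few exceptional $a$. By the Weil bound (Lang--Weil for curves/surfaces), the distribution of $f(\phi(\mathbf{U}))$ is then $O(d^2/\sqrt q)$-close to the distribution of $f(\mathbf{U})$. This is essentially the Derksen--Viola argument, and I would take it as given from the preliminaries. With $q=\Omega(d^4/\eps^2)$ the error is $\eps/2$.

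So everything reduces to constructing a seed-efficient family of maps $\phi$ such that, for every degree-$d$ $f$, with probability $1-\eps/2$ over the seed, decomposability of $f\circ \phi$ happens only when $f$ itself is decomposable (in which case one recurses on the outer polynomial, as DGV do). For this step I would use the section ``Evaluations that Retain Irreducibility''. The plan is to write $f\circ\phi$ with $\phi(z)=\mathbf{a}+z\mathbf{b}+\ldots$ and to encode ``$f\circ\phi$ decomposes'' as the vanishing of certain polynomial expressions (resultants or coefficients of a Hensel/derivative criterion) in the coordinates of $\phi$. These expressions have degree $\poly(d)$ as polynomials in $\phi$'s parameters and are nonzero because $f$ is indecomposable. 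The characteristic assumption $\Omega(d^2)$ enters here: it makes the derivative-based indecomposability criterion (Ritt/Gutierrez-type, via the derivative $f'$ and the ``approximate root'') valid.

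The main obstacle is derandomizing the choice of $\phi$ with only $O(d\log n+\log q)$ seed while the hit polynomials have degree $\poly(d)$. A standard HSG for degree-$D$ polynomials in $n$ variables over fields of size only $d^4$ would cost $D\log n$ bits, which is too many, and DGV's approach needs $q\ge 2^d$. My plan, matching the section ``HSG with polynomial evaluations'', is to replace the HSG by a new object: the parameters of $\phi$ are themselves taken as $\poly$-degree polynomial evaluations, for instance Reed--Solomon-style $x_j = \sum_{k\le d} s_k \alpha_j^k$ over an extension $\FF_{q^r}$ with $r=O(\log_q n)$, which can be represented in $\FF_q$. I would argue that it suffices to hit the specific structured polynomials arising from decomposability, because they factor through $f$ (degree $d$) composed with a low-degree map. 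Hence a generator fooling \emph{degree $d$} tests, with $O(d\log n)$ seed, suffices, and the blow-up is absorbed by Schwartz--Zippel over the field of size $\Omega((d\log d)^4)$. The $\log d$ factors come from this union bound. Finally, I would take the union of errors, verify explicitness, and set parameters to conclude \cref{thm:main:intro}.
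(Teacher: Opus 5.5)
Your outline has the right outer shell: reduce to the indecomposable inner $h$ in $f=g\circ h$, preserve indecomposability under a random low-dimensional restriction, and finish with the Derksen--Viola Weil-bound lemma. However, the core derandomization step is missing, and the obstacle is misdiagnosed. The difficulty is not the degree of the polynomials to be hit. After an HSG-chosen shear $s_{\mathbf a}$ puts $f-t$ in Hypothesis~(H), Lecerf's criterion produces $2^{d-1}-1$ nonzero polynomials $Q_i$ of degree at most $2d-1$; this criterion is where $\operatorname{char}\FF_q>d(d-1)$ is used. The restriction point $\mathbf b$ must avoid all of their zero sets simultaneously. Hitting degree-$(2d-1)$ polynomials costs only $O(d\log n)$ bits, so the $D\log n$ cost you raise is not the issue. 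The issue is that the union bound needs density $1-O(d/(2^d q))$, while every HSG over $\FF_q$ has density at most $1-\Omega(d/q)$. ``Schwartz--Zippel over a field of size $(d\log d)^4$'' cannot absorb a union bound over $2^{d-1}$ events. Your Reed--Solomon suggestion with extension degree $r=O(\log_q n)$ gives density at best $1-d/\poly(n)$, which fails once $d\gg\log n$.

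The missing idea is the polynomial hitting set generator. Take an HSG over $\FF_{q^k}$ with $q^k\ge 2^d q$, i.e.\ $k\approx 2d/\log q$. Realize $\FF_{q^k}$ as $\FF_q[w_1,\ldots,w_\ell]/P$ via a randomly sampled tower of $\ell=\log k$ quadratic extensions, so the evaluation points $\mathbf b(\mathbf w)$ are polynomials of degree at most $\ell$ in $\ell$ fresh variables. One then applies Lecerf's criterion over $\mathbb{L}=\overline{\FF_q(t,\mathbf w)}$. Gauss's lemma, using that the $y^d$-coefficient of $F-t$ is a nonzero constant, then gives irreducibility of $F-t$ in $\overline{\FF_q(t)}[x,y,\mathbf w]$. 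Only this makes $F$ indecomposable as a polynomial in $x,y,\mathbf w$, which is what justifies substituting a uniformly random $\mathbf w\in\FF_q^\ell$ as part of the seed. Your ``can be represented in $\FF_q$'' step has no such justification. Consequently the restriction uses $\ell+2=\Theta(\log d)$ fresh variables, not $O(1)$, and $\deg F\le d\ell$. The $\log d$ factors in $q=\Omega((d\log d)^4/\eps^2)$ come from this degree blow-up in the Weil error $O((d\log d)^2/\sqrt q)$, not from a union bound. Your claim that $f\circ\phi$ has error $O(d^2/\sqrt q)$ ignores the degree of $\phi$.
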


Compared to \cite{DV22} and  \cite{dwivedi2024optimalpseudorandomgeneratorslowdegree}, our construction 
improves upon both works simultaneously: we achieve optimal seed length already for $q \ge \poly(d)$ (rather than $q \ge \exp(d)$ as in \cite{dwivedi2024optimalpseudorandomgeneratorslowdegree}), and we avoid the dependence of $q$ on $n$ present in \cite{DV22}.

\subsubsection*{A Threshold Phenomenon for PRGs for Low-Degree Polynomials}

As discussed, the study of PRGs for low-degree polynomials has split into two branches: PRGs for constant field size, most notably the binary field, and PRGs for fields whose size is sufficiently large as a function of the degree~$d$ (and, for some constructions, also of the number of variables~$n$). These two branches rely on fundamentally different techniques. Our result falls into the second line of work: we construct an optimal-seed PRG that works whenever $q$ is at least roughly $d^4$.

It was observed in~\cite{DV22} that techniques developed for large fields, such as those
relying on the Weil bound, break down in the small-field regime. Perhaps surprisingly,
our second contribution shows that there is, in fact, an inherent threshold phenomenon.
When $q$ is a power of two, we prove that improving the quartic dependence of $q$ on~$d$
to a sublinear one—namely, $q = d^{1-\tau}$ for some fixed constant $\tau > 0$—would
immediately yield a PRG construction over the binary field with seed length comparable
to that of the PRG we started with. For our reduction to apply, the PRG must not impose
any restriction on the characteristic of the field, as is the case, for example, in the
Derksen--Viola construction~\cite{DV22}.

This implies that there cannot be incremental progress toward constructing PRGs over the
binary field: once the current quartic dependence is improved to a sublinear one, and
provided there are no restrictions on the characteristic, one immediately obtains a
comparable PRG over the binary field. Our reduction holds in greater generality and, in
particular, applies to fields of any odd characteristic.

\begin{theorem}[see also \cref{prop:main-reduction}]\label{thm:f2}
Assume that for any $n,d,q$ such that $q \ge d^{1-\eta}$ for some constant $\eta \in (0,1)$ there exists an explicit PRG for $n$-variate polynomials of total degree at most $d$ over $\FF_q$, with error $0.1$ and seed length $O(d^{O(1)}\log n + \log q)$. Then, there also exists an explicit PRG for $n$-variate polynomials of total degree at most $d$ over $\FF_2$, with error $0.1$, and seed length $O(d^{O(1)}\log n)$.
\end{theorem}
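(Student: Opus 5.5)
Fix a degree-$d$ polynomial $f\in\FF_2[x_1,\dots,x_n]$. The plan is to embed it into a field $\FF_q$ with $q=2^k$, where $k$ is chosen so that $q$ is roughly $d^{1-\eta}$ but not smaller. That is, choose $q$ just at the threshold permitted by the hypothesis, for a degree $D=\poly(d)$ to be fixed below; in particular $q\le d$ is allowed. The difficulty is that $f$ is $\FF_2$-valued only on $\FF_2^n$, while the assumed PRG outputs points of $\FF_q^n$. So we want a low-degree map $\phi\colon\FF_q\to\FF_2$ that pushes the uniform distribution on $\FF_q$ to the uniform distribution on $\FF_2$. The natural choice is the trace $\Tr\colon\FF_{2^k}\to\FF_2$, which is $\FF_2$-linear, balanced, and equal to the polynomial $y+y^2+\dots+y^{2^{k-1}}$ of degree $2^{k-1}<q$.

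Define $F(y_1,\dots,y_n)=f(\Tr(y_1),\dots,\Tr(y_n))$, where $f$ is viewed over $\FF_q\supseteq\FF_2$. Then $F$ has degree at most $D=d\cdot 2^{k-1}\le dq/2$, and it takes values in $\FF_2\subseteq\FF_q$. If $\mathbf{y}$ is uniform on $\FF_q^n$, then $\Tr(\mathbf{y})$ is uniform on $\FF_2^n$, so $F(\mathbf{y})$ is distributed exactly as $f(\mathbf{U}_{\FF_2^n})$.

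Now apply the assumed PRG $G$ for degree-$D$ polynomials over $\FF_q$. We need $q\ge D^{1-\eta}=(dq/2)^{1-\eta}$, which rearranges to $q^{\eta}\ge (d/2)^{1-\eta}$, i.e. $q\ge (d/2)^{(1-\eta)/\eta}$. So choose $k=\lceil \tfrac{1-\eta}{\eta}\log d\rceil+O(1)$. This gives $q=d^{O(1/\eta)}=\poly(d)$ for constant $\eta$, and $D=\poly(d)$. The generator over $\FF_2$ is $G'(t)=\Tr(G(t))$, with $\Tr$ applied coordinatewise. Since the total variation distance between $F(G(\mathbf U_s))$ and $F(\mathbf U_{\FF_q^n})$ is at most $0.1$, and these are exactly $f(G'(\mathbf U_s))$ and $f(\mathbf U_{\FF_2^n})$, the error is preserved. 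The seed length is $O(D^{O(1)}\log n+\log q)=O(d^{O(1)}\log n)$. $G'$ is explicit because the trace is computable in $\poly(k)$ field operations.

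The step needing care, and the expected main obstacle, is the parameter balancing: the degree blow-up caused by the trace is itself about $q$, so a PRG that needs $q\gtrsim D$ would be useless here. That is exactly why sublinearity ($q\ge D^{1-\eta}$) is required, and why the resulting field size, and hence $D$, is polynomial in $d$ only with exponent $O(1/\eta)$. A second point to check is that the reduction uses characteristic $2$, so it needs the PRG hypothesis without restriction on the characteristic. For odd characteristic $p$ we would use the same argument with $\Tr$ to $\FF_p$, noting that a balanced map to $\FF_2$ would need extra care.
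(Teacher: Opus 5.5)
Your proposal is correct and is essentially the paper's proof: take $G'=\Tr\circ G$ coordinatewise, observe that $f\circ\Tr$ has degree at most $dq/2$ over $\FF_q$ and that $\Tr$ pushes $\mathbf{U}_{\FF_q^n}$ to $\mathbf{U}_{\FF_2^n}$, and solve $q\ge (dq/2)^{1-\eta}$ to get $q\ge (d/2)^{(1-\eta)/\eta}=\poly(d)$. (Your opening remarks that $q$ is ``roughly $d^{1-\eta}$'' and that ``$q\le d$ is allowed'' should be dropped, because the threshold applies to the blown-up degree $D$, not to $d$, and your later computation already supersedes them.)
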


Constructing $G_2$, the desired PRG over $\FF_2$, is simple: Set $q = d^c$ to be a power of $2$, where $c=c(\eta)$, and let $G_q$ be our hypothesized PRG over $\FF_q$. Then, each element of $G_2$ is obtained by taking the \emph{absolute trace} of each coordinate of an element of $G_q$. That is, $G_{2}(z)_i = \Tr(G_q(z)_i)$. We defer the (easy) proof to \cref{sec:f2}, and proceed to giving an overview of \cref{thm:main:intro}.

\subsection{Proof Overview}\label{sec:overview}
We begin with describing the framework introduced by Derksen and Viola~\cite{DV22}. Let $\mathbf{x} = (x_1,\ldots,x_n)$. A polynomial
\( f \in \mathbb{F}_q[\mathbf{x}] \) is called \emph{indecomposable} if it cannot be written as
\( f = g \circ h \), where \( h \in \mathbb{F}_q[\mathbf{x}] \) and \( g \in \mathbb{F}_q[t] \) is a
univariate polynomial with \( \deg(g) \geq 2 \). The notion of indecomposability is interesting partially because we understand the distribution of its image. This is formalized by the following lemma.
\begin{lemma}[\cite{DV22}, Lemma 12]\label{lemma:equidistribution in overview}
    There exists an absolute constant $c>0$ such that the following holds: Suppose $f\in\FF_q[\mathbf{x}]$ is indecomposable over $\FF_q$. Then, $f(\mathbf{U}_{\FF_q^n})$ is $\varepsilon$-close to $\mathbf{U}_{\FF_q}$, where $\varepsilon=c \cdot d^2/\sqrt{q}$. 
\end{lemma}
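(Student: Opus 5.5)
The plan is to bound the statistical distance via the fiber counts $N_a=|\{x\in\FF_q^n: f(x)=a\}|$. For each $a\in\FF_q$, the fiber is the zero set of $f-a$, so it suffices to show that $N_a=q^{n-1}(1+O(d^2/\sqrt q))$ for all but $O(d)$ values of $a$. For the exceptional values we need only the crude bound $N_a\le d q^{n-1}$ given by Schwartz--Zippel. Summing $|N_a/q^n-1/q|$ over $a$ then gives $O(d^2/\sqrt q)+O(d\cdot d/q)$, which is $O(d^2/\sqrt q)$, as the statement requires.

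The first step is to connect indecomposability to absolute irreducibility of the fibers. The key algebraic fact, a Stein/Bertini-type theorem for the pencil $\{f-a\}$, is that if $f$ is indecomposable over $\FF_q$ (and we may assume $\deg f\ge 1$, the constant case being degenerate), then $f-a$ is absolutely irreducible for all but at most $\deg f-1$ values $a\in\overline{\FF_q}$. I would argue this as follows. Suppose $f-a$ is reducible over $\overline{\FF_q}$ for more than $d-1$ values of $a$. By Stein's theorem this forces $f=g\circ h$ over $\overline{\FF_q}$ with $\deg g\ge2$. One then has to descend this decomposition to $\FF_q$. Descent holds because decompositions can be normalized (monic $g$, with $h$ having zero constant term) and the resulting $h$ is unique up to the relevant equivalence. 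Uniqueness makes $h$ Galois-invariant, hence defined over $\FF_q$, which contradicts indecomposability over $\FF_q$. If the intended statement also needs care in small characteristic, I would absorb it by working with $\overline{\FF_q}$-indecomposability, or assume the characteristic is large as elsewhere in the paper.

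The second step is to count points on each good fiber. For every $a$ with $f-a$ absolutely irreducible, I apply the Lang--Weil bound in the explicit form of Cafure--Matera or Ghorpade--Lachaud. For an absolutely irreducible hypersurface of degree $d$ in $\FF_q^n$, it gives $|N_a-q^{n-1}|\le (d-1)(d-2)q^{n-3/2}+5d^{13/3}q^{n-2}$. To avoid the $d^{13/3}$ term, I would instead reduce to the case $n=2$: restrict to a random plane (or use a Bogdanov-style restriction argument) on which $f-a$ stays absolutely irreducible, and then apply the Weil bound for curves, which gives an error of $O(d^2\sqrt q)$. Averaging over planes, or using the standard fibration argument, then yields $N_a=q^{n-1}\pm O(d^2 q^{n-3/2})$.

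The main obstacle is keeping the error at exactly $d^2/\sqrt q$ with no $n$-dependence and no worse polynomial in $d$. The plan there is to use an effective Bertini theorem, in Kaltofen's form, showing that for most affine planes the restriction of $f-a$ is still absolutely irreducible. The bad planes form a set of density $O(d^4/q)$, and that term is dominated by $d^2/\sqrt q$ whenever the bound is nontrivial, i.e. when $d^2/\sqrt q<1$. Summing the good-fiber errors over at most $q$ values of $a$ contributes $q\cdot O(d^2q^{-3/2})=O(d^2/\sqrt q)$, and adding the exceptional fibers completes the estimate.
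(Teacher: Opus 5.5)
The paper does not prove this lemma; it quotes it from \cite{DV22}, so I judge your outline on its own terms. Your first step is fine: Stein's bound in its positive-characteristic form (Lorenzini, Najib), plus descent from $\overline{\FF_q}$ to $\FF_q$, which the paper takes from \cite{Bodin_2009}. For your uniqueness argument you must choose $h$ non-composite, since otherwise $h^4=(h^2)^2$ already has two normalized decompositions.

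\textbf{The gap is the error accounting in your last paragraph.} You apply the effective Bertini theorem separately to each fibre $f-a$, so the set of bad planes depends on $a$. Write $N_a(\Pi)$ for the number of points of $\Pi$ in $\{f=a\}$. On a bad plane you only know $N_a(\Pi)\le dq$ by Schwartz--Zippel, and even $N_a(\Pi)=q^2$ when $\Pi\subseteq\{f=a\}$. Writing $N_a/q^{n-1}=\EE_\Pi[N_a(\Pi)/q]$, the bad planes can therefore contribute a relative error of order $d\cdot\Pr_\Pi[\Pi\text{ is bad for }a]=O(d^5/q)$, not $O(d^4/q)$. What your argument actually yields is $N_a=q^{n-1}\bigl(1\pm O(d^2/\sqrt q+d^5/q)\bigr)$, and after summing over the good fibres a statistical distance of $O(d^2/\sqrt q+d^5/q)$. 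The second term exceeds $d^2/\sqrt q$ whenever $q<d^6$. At $q\approx d^4$, the regime this paper needs, it is of order $d$, i.e.\ vacuous. Invoking ``the standard fibration argument'' does not help, since it has the same per-fibre bad locus.

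\textbf{The missing idea is to make the bad event independent of $a$.} Then it is paid once in total variation rather than once per fibre with weight up to $d$. Apply the effective Bertini theorem a single time, to $f-t$, which is irreducible over $\overline{\FF_q(t)}$ by \cref{lemma:indecomposable equivalence to irreducibility}, with the plane's coefficients drawn from $\FF_q$. Outside an event of probability $O(d^4/q)$, the restriction $f|_\Pi$ is then indecomposable over $\overline{\FF_q}$.

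A uniform point of a uniformly random affine plane is uniform in $\FF_q^n$. Hence the distance of $f(\mathbf{U}_{\FF_q^n})$ from $\mathbf{U}_{\FF_q}$ is at most $\Pr[\Pi\text{ is bad}]$ plus the largest distance, over good $\Pi$, of $f|_\Pi(\mathbf{U}_\Pi)$ from $\mathbf{U}_{\FF_q}$. For a good plane, Stein's bound for the bivariate $f|_\Pi$ leaves at most $d-1$ reducible fibres, each with at most $dq$ points. The Weil bound gives $|N_a(\Pi)-q|=O(d^2\sqrt q)$ for all other fibres, so this distance is $O(d^2/\sqrt q)$. Now the $O(d^4/q)$ term really is dominated by $d^2/\sqrt q$ in the nontrivial range $d^2<\sqrt q$, which is the comparison you wanted to make.
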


The general approach in our work, following the works \cite{Bog05, GXHittingSets, DV22, dwivedi2024optimalpseudorandomgeneratorslowdegree}, is to restrict \(f\) to a carefully chosen subset of
\(\mathbb{F}_q^n\) while preserving some algebraic property, closely related to its output distribution.
Specifically, let \( f \in \mathbb{F}_q[\mathbf{x}] \) be an \emph{indecomposable} polynomial of degree at most \( d \). Suppose we can find polynomials
\( p_1, \ldots, p_n \in \mathbb{F}_q[w_1, \ldots, w_\ell] \), with \( \ell \ll n \), such that for every such $f$, the
composed polynomial
\( f \circ (p_1, \ldots, p_n) \in \mathbb{F}_q[w_1, \ldots, w_\ell] \) is indecomposable. Let $\mathbf{w} = (w_1,\ldots,w_\ell)$, let
\( \mathbf{p} = (p_1, \ldots, p_n) \) be the restriction map, and let $\deg \mathbf{p} = \max_i \{\deg p_i\}$.
Now, \cref{lemma:equidistribution in overview} applies both to \( f \) and to its restriction via
\( f \circ \mathbf{p} \). In particular, the distribution of
\( f(\mathbf{U}_{\FF_q^n}) \) is $O(d^2 / \sqrt{q})$-close to uniform over \( \mathbb{F}_q \), and similarly the
distribution of \( f(\mathbf{p}(\mathbf{U}_{\FF_q^\ell})) \) is $O((d \cdot \deg \mathbf{p})^2 / \sqrt{q})$-close to \( \mathbf{U}_{\mathbb{F}_q} \).
This means that the function $G:\FF_q^\ell \rightarrow \FF_q^n$ defined by \[G(\mathbf{w}) = \mathbf{p}(\mathbf{w})\] is a PRG for indecomposable polynomials of degree at most $d$ with seed length $\ell \log q$, and error
\begin{equation}\label{eq:epsilon-in-PRG}
    \varepsilon = O ((d \cdot \deg \mathbf{p})^2 / \sqrt{q}).
\end{equation}

For an \emph{arbitrary} $n$-variate polynomial $f$ of degree at most $d$, we can always write
\( f = g \circ h \), where \( h \in \mathbb{F}_q[\mathbf{x}] \) is indecomposable and
\( g \in \mathbb{F}_q[t] \) is univariate. $h$ is an $n$-variate indecomposable polynomial of degree at most $d$, therefore the distributions $h(\mathbf{U}_{\FF_q^n})$ and $h( G(\mathbf{U}_{\FF_q^\ell}))$ are both close to $\mathbf{U}_{\FF_q}$.
Hence, the distributions \( f(\mathbf{U}_{\FF_q^n}) = g(h(\mathbf{U}_{\FF_q^n})) \) and $f(G(\mathbf{U}_{\FF_q^\ell})) = g(h(G(\mathbf{U}_{\FF_q^\ell})))$ are both close to $g(\mathbf{U}_{\FF_q})$. This shows that $G$ is actually a PRG for all $n$-variate polynomials of degree at most $d$.

The challenge is, of course, to find low-degree \(p_1,\ldots,p_n\) that preserve
indecomposability for all indecomposable $n$-variate polynomials $f$ of degree at most $d$. 

\subsubsection{The  Derksen-Viola restriction map}
A main contribution of \cite{DV22} is the explicit construction of such restriction polynomials \(p_1,\ldots,p_n\), which proceeds as follows. Let
\(M_1,\ldots,M_n\) be distinct monomials in \(m\) variables. Consider
\(\ell\) independent copies of these variables, and denote by
\(M_i^{[j]}\) the monomial \(M_i\) evaluated on the variables from the
\(j\)-th copy. Then define
\[
p_i \;=\; M_i^{[1]} + \cdots + M_i^{[\ell]} \, .
\]
Using tools from invariant theory, \cite{DV22} show that for a
suitable choice of parameters and monomials, the resulting substitution
\(\mathbf{p} = (p_1,\ldots,p_n)\) preserves indecomposability.
This allows them to construct a PRG with optimal seed length
\(O(d \log(dn) + \log q)\), assuming
\(q = \Omega(d^4 n^{0.001}/\varepsilon^2)\), or alternatively a PRG with seed length
\(O(d \log n \cdot \log(d \log n) + \log q)\), assuming
\(q = \Omega((d \log n)^4/\varepsilon^2)\).

\subsubsection{Constructing the restriction map via hitting set generators}\label{subsubsec:lecerf technique}
In contrast to \cite{DV22}, approach, Bogdanov \cite{Bog05} and Dwivedi, Guo, and Volk
\cite{dwivedi2024optimalpseudorandomgeneratorslowdegree} adopt a
derandomization approach. Rather than constructing a single restriction map
\(\mathbf{p}\) that preserves indecomposability for all polynomials of
degree at most \(d\), they use a carefully designed distribution over
restriction maps \(\mathbf{p}\). This distribution has the property that for
every indecomposable polynomial \(f\) of degree at most \(d\), the
composition \(f \circ \mathbf{p}\) remains indecomposable with high
probability. In both works, the restriction polynomials are linear, and the distribution is designed using a pseudorandom object called a \textit{hitting set generator} (HSG).

\begin{definition}\label{def:hsg}
A function $H \colon T \rightarrow\FF_q^n$ is a 
\textit{hitting set generator} (HSG) with density $\delta$ for $n$-variate polynomials of degree at most $d$ over $\FF_q$ if for every such $f \neq 0$,
\[
    \Pr_{t \in T}[f(H(t)) \neq 0] \geq 1 - \delta.
\]
The \emph{seed length} of the HSG is $\log_2 |T|$, and we say that $H$ is \emph{explicit} if for any $t \in T$, $H(t)$ can be computed in time $\operatorname{poly}(n,d,\log q, \log 1/\delta)$. 
\end{definition}
Since our construction builds
primarily on \cite{dwivedi2024optimalpseudorandomgeneratorslowdegree}, we proceed by describing their construction in more detail.

Let
\[
p_i(x,y) =
\begin{cases}
    \beta_i x + \alpha_i y & 1 \leq i \leq n-1 \\
    y & i = n
\end{cases},
\]
where the vectors \(\mathbf{\alpha}, \mathbf{\beta} \in \mathbb{F}_q^n\) are sampled from a HSG for $(n-1)$-variate polynomials of degree at most $O(d)$ (where the implicit constant is absolute). This transformation can also be written as follows: Let $s_\mathbf{\alpha}$ be the ring automorphism of $\FF_q[\mathbf{x}]$ such that 
$$
s_\mathbf{\alpha}(f(\mathbf{x})) = f(x_1 + \alpha_1x_n,\ldots,x_{n-1}+\alpha_{n-1}x_n, x_n).
$$
Let $r_\mathbf{\beta} \colon \FF_q[\mathbf{x}]\rightarrow \FF_q[x,y]$ be the homomorphism such that $r_\mathbf{\beta}(f(\mathbf{x})) = f(\beta_1x,\ldots,\beta_{n-1}x,y)$. Then, for all $f \in \FF_q[\mathbf{x}]$ we have $$f \circ \mathbf{p} = r_\mathbf{\beta} \circ s_\mathbf{\alpha} \circ f.$$
It is shown in \cite{dwivedi2024optimalpseudorandomgeneratorslowdegree} that for every
degree-$d$ \(f \in \mathbb{F}_q[\mathbf{x}]\), there exists a polynomial
\(B \in \mathbb{F}_q[x_1,\ldots,x_{n-1}]\) of degree at most \(d\) with the following
property. For any \(\boldsymbol{\alpha}\) satisfying \(B(\boldsymbol{\alpha}) \neq 0\),
the polynomial $$s_{\boldsymbol{\alpha}}(f) - t \in \mathbb{F}_q(t)[\mathbf{x}]$$
satisfies \emph{Hypothesis~(H)} (up to multiplication by an element of
\(\mathbb{F}_q^\times\)). Hypothesis~(H) is a condition required for applying
\emph{Lecerf’s technique}; see \cref{def:hypothesis H} for details. Thus, by picking $\mathbf{\alpha}$ using a HSG, it is promised that the polynomial $s_\alpha(f)-t$ satisfies Hypothesis~(H) with high probability. 

They proceed by showing, building on results of Lecerf~\cite{Lec06, Lec07}, that if
\(g \in \mathbb{F}_q[\mathbf{x}]\) is indecomposable and \(g - t\) satisfies
Hypothesis~(H), then the restriction \(r_{\boldsymbol{\beta}}(g)\) remains
indecomposable with high probability.

The proof proceeds roughly as follows: Let $\mathbb{K}$ = $\overline{\FF_q(t)}$ be the algebraic closure of $\FF_q(t)$. Let $g \in \mathbb{K}[\mathbf{x}]$ be a polynomial satisfying Hypothesis~(H). For such a polynomial $g$, assuming that $\text{char}\ \FF_q = \text{char}\, \mathbb{K} > d(d-1)$, Lecerf constructed a linear system of equations $D_g$ in $d$ variables, with the coefficients being $(n-1)$-variate polynomials in $\mathbb{K}[z_1,\ldots,z_{n-1}]$ of degree at most $2d-1$, with the following properties:
\begin{enumerate}
    \item The number of irreducible factors of $g$ equals the dimension of the space of solutions of $D_g$. \label{item:solutions space dimension}
    \item For $\beta \in \mathbb{K}^{n-1}$, let $D_{g}^\mathbf{\beta}$ be the system of equations $D_g$ after evaluating $(z_1,\ldots,z_{n-1})=\mathbf{\beta}$. The number of irreducible factors of $r_\mathbf{\beta}(g)$ equals the dimension of the space of solutions of $D_{g}^\mathbf{\beta}$.\label{item:solutions space dimesnsion after evaluation}
    \item There are sets $S_1\subseteq S_2 \subseteq \{0,1\}^{d}$ such that the space of solutions of $D_g$ is spanned by $S_1$, and the space of solutions of $D_g^\mathbf{\beta}$ is spanned by $S_2$. \label{item:solution space and irreducibility}
\end{enumerate}
The condition that a polynomial $f \in \FF_q[\mathbf{x}]$ is indecomposable is equivalent to $g = f-t \in \mathbb{K}[\mathbf{x}]$ being irreducible (see \cref{lemma:indecomposable equivalence to irreducibility}).
Hence, preserving indecomposability of $f$ is equivalent to preserving the space of solutions of $D_g$ in the reduction to $D_g^\mathbf{\beta}$.

Let $f \in \FF_q[\mathbf{x}]$ be an indecomposable polynomial satisfying Hypothesis~(H). Then $g = f-t \in \mathbb{K}[\mathbf{x}]$ is irreducible, and by \cref{item:solutions space dimension,item:solution space and irreducibility} the solution space of $D_g$ is spanned by a single vector $v \in \{0,1\}^d$. For every $v \neq w \in \{0,1\}^d$, there is an equation in $D_g$ such that $w$ does not satisfy, i.e., there are polynomials $P_1^w,\ldots,P_d^w \in \mathbb{K}[z_1,\ldots,z_{n-1}]$ of degree at most $2d-1$ such that $$Q_w := \sum_{i=1}^d w_i \cdot P_i^w \neq 0.$$ Notice that $Q_w \in \mathbb{K}[z_1,\ldots,z_{n-1}]$ is a polynomial of degree at most $2d-1$ as well. By \cref{item:solutions space dimesnsion after evaluation}, if $Q_w(\mathbf{\beta}) \neq 0$ for all $v \neq w \in \{0,1\}^d$, then the polynomial $r_\mathbf{\beta}(g)$ remains irreducible, which means that the polynomial $r_\mathbf{\beta}(f)$ is indecomposable.

This implies that for every indecomposable polynomial that satisfies Hypothesis~(H), there are $2^d-1$ polynomials in $\mathbb{K}[z_1,\ldots,z_{n-1}]$ of degree at most $2d-1$ such that if each one of them does not vanish at $\mathbf{\beta}$, then $r_\mathbf{\beta}(f)$ remains indecomposable. Hence, by picking $\mathbf{\beta}$ using an HSG for $n-1$-variate polynomials of degree at most $2d-1$ with small enough density, the probability that $r_\mathbf{\beta}(f)$ remains indecomposable is high.
Thus, using the union bound, the probability that that the substitution $f \circ \mathbf{p}$ preserves indecomposability is at least
\(1 - (2^{d}-1)\delta\).

Any HSG for polynomials of degree at most $2d-1$ over \(\mathbb{F}_q\) has density at most
\(1-\delta\) with \(\delta = \Theta(d/q)\), as the density of zeroes of a degree-$d$ polynomial can be $\Omega(d/q)$. Consequently, the
analysis yields a meaningful guarantee only when $q = \Omega(d\,2^{d})$, which is where the exponential dependence in $d$ shows up in the field size.

\subsubsection{Our construction via polynomial hitting set generators}\label{subsubsec:our construction}
In this work, we further refine this approach to achieve optimal seed-length PRGs for much smaller field sizes. Our key idea is to choose
\(\boldsymbol{\beta}\) not as a vector over \(\mathbb{F}_q\), but as a \emph{vector of
polynomials} of bounded degree in a small number of variables. Let \(A\) denote the size
of the set from which each polynomial coordinate  is chosen. By the Schwartz--Zippel
lemma, any nonzero degree-\(d\) polynomial vanishes on at most a \(d/A\) fraction of this
distribution. This suggests that one can derandomize this construction -- analogously to
standard HSGs over \(\mathbb{F}_q\) -- to obtain significantly higher-density hitting sets.
To achieve this, we introduce the
notion of a \emph{polynomial hitting set generator} (PHSG).
\begin{restatable}{definition}{PHSGDefinition}
        Let $\FF$ be a finite field. A \emph{polynomial hitting set generator (PHSG)} with density $1-\delta$ for $n$-variate polynomials of degree at most $d$ over $\FF$ with $\ell$-variate polynomial evaluation points of degree at most $h$ is a map $$H \colon T \rightarrow (\FF^{\leq h}[w_1,\ldots,w_\ell])^n$$ from a finite set $T \neq \emptyset$ such that for every such nonzero polynomial $f \in \FF[x_1,\ldots,x_n]$ of degree at most $d$,
    \[
        \Pr_{y \in T} [f(T(y)) = 0] \leq \delta.
    \]
    The quantity $\log |T|$ is called the \textit{seed length} of $H$.
\end{restatable}
In \cref{sec:PHSG}, we show that any HSG for polynomials over a field
extension \(\mathbb{F}_{q^k}\) can be simply turned into a PHSG over \(\mathbb{F}_q\). Consequently, existing constructions of HSGs
immediately yield PHSGs.\footnote{
An obstacle in that approach is that explicitly constructing an HSG over a field extension
requires an explicit representation of \(\mathbb{F}_{q^k}\), which in turn necessitates
the construction of irreducible polynomials over \(\mathbb{F}_q\), for which the best-known deterministic algorithm would be too costly. 

To overcome this issue, we pick the required irreducible polynomials at random, and
use a sampler to amplify the success probability, incurring only a small additional
randomness cost. Fortunately, this incurs only a constant-factor increase in the overall seed length.} So indeed, in our construction
we choose each \(p_i\) to be of the form
\[
p_i(x,y,w_1,\ldots,w_\ell) = b_i(w_1,\ldots,w_\ell)\cdot x + \alpha_i \cdot y,
\]
where \(\mathbf{\alpha} \in \mathbb{F}_q^n\) is sampled from an HSG $H_1 \colon T \rightarrow\FF_q^n$ with density $1-\delta_1$, and
\(\mathbf{b} \in \mathbb{F}_q^{\leq h}[w_1,\ldots,w_\ell]^n\) is sampled from a PHSG $H_2 \colon T_2 \rightarrow (\FF^{\leq h}[w_1,\ldots,w_\ell])^n$ with density $1-\delta_2$.

Adapting the analysis of
\cite{dwivedi2024optimalpseudorandomgeneratorslowdegree}, assuming $\text{char}\, \FF_q > d(d-1)$ we show that the probability that
\(\mathbf{p}\) preserves indecomposability is at least
\[
    1 - \delta_1 - (2^{d-1}-1)\delta_2.
\]
The analysis extends naturally to the setting where the evaluation points are themselves
polynomials rather than field elements, i.e., are taken from a field
extension. In this case, one can show that the polynomial
\(r_{\mathbf{b}}(f) - t\) is irreducible as an element of
\(\overline{\mathbb{F}_q(t,w_1,\ldots,w_\ell)}[x,y]\).
We then use the structure of the polynomial and apply Gauss’s lemma to deduce that this polynomial is in fact irreducible over \(\overline{\mathbb{F}_q(t)}[w_1,\ldots,w_\ell,x,y]\).
This implies that the composed polynomial
\(f \circ \mathbf{p} \in \mathbb{F}_q[w_1,\ldots,w_\ell,x,y]\) is indecomposable.

In \cref{sec:PHSG}, we construct PHSGs with $\ell = h =:  \log k$, seed length
\(d\log n + k\log q\) and density \(1-\delta\), where \(\delta \ge d/q^k\). Choosing $\delta_2 = O(d/q^k)$ with \(k = d/\log q\), and $\delta_1 = O(d/q)$ gives
\[
    \Pr[f \circ \mathbf{p}\text{  is indecomposable}] = 1-O(d/q).
\]
This way we eliminate the requirement \(q \geq \Omega\left(d2^{d}\right)\).
This comes at the cost of increasing the number of variables in $\mathbf{p}$ from \(2\) to
\(\ell\), and using a PHSG in addition to the HSG. Those
increase the seed length only by a constant factor: 
the PHSG instantiated with our parameters requires seed length of $O(d \log n + \ell \log q)$, and the uniform distribution over $\FF_q^\ell$ requires additional $\ell \log q$ random bits. As $\ell \log q = O(d + \log q)$, the resulting PRG seed length is $O(d \log n + \log q)$.

The requirement $q \geq (d \log d)^4/\varepsilon^2$ follows from \cref{eq:epsilon-in-PRG}, since $$\deg \mathbf{p} \leq h = \ell = O(\log d).$$ As explained earlier, the requirement $\operatorname{char}(\FF_q) > d(d-1)$ follows from the use of Lecerf's technique.

To sum up, our pseudorandom generator $G:T_1 \times T_2 \times \FF_q \times \FF_q \times \FF_q^\ell \rightarrow \FF_q^n$ is defined by
\begin{equation}
    G(r, s, u, v, \mathbf{t}) = (H_1(r)_1(\mathbf{t}) \cdot v + H_2(s)_1 \cdot u, \ldots , H_1(r)_n(\mathbf{t}) \cdot v + H_2(s)_n \cdot u, u),
\end{equation}
where $\mathbf{t} = (t_1,\ldots,t_\ell)$, and $u,v \in \FF_q$.

%% file: preliminaries.tex
\section{Preliminaries}

We denote by $\mathbb{N}$ the set of nonnegative integers, including 0.
Throughout the paper, boldface letters denote vectors; for example,
$\mathbf{x} = (x_1,\ldots,x_n)$.
For a multi-index $\mathbf{i} = (i_1,\ldots,i_n) \in \mathbb{N}^n$, we use the standard notation
$
\mathbf{x}^{\mathbf{i}} := x_1^{i_1}\cdots x_n^{i_n}.
$

\paragraph{Resultants.}
Let $R$ be a commutative ring and let $f(x) = \sum_{i=0}^{d_1} a_i x^i$ and $g(x) = \sum_{i=0}^{d_2} b_i x^i$ be polynomials in $R[x]$, with $a_{d_1} \neq 0$ and $b_{d_2} \neq 0$. Suppose that $d_1 + d_2 \geq 1$.

\begin{definition}[Sylvester matrix]
The \emph{Sylvester matrix} $\Syl(f,g)$ of $f$ and $g$ is the following $(d_1+d_2)\times(d_1+d_2)$ matrix defined over $R$:
\[
\begin{pmatrix}
	a_{0} & & & &  b_{0} & & & & \\
	a_{1} & a_{0} & &  & b_{1} & b_{0} & & \\
	a_{2} & a_{1} & \ddots & & b_{2} & b_{1} & \ddots & & \\
	\vdots & & \ddots & a_{0} & \vdots & & \ddots & & b_{0} \\
	& \vdots & & a_{1} & b_{d_2} & \vdots & &  & b_{1} \\
	a_{d_1} & & & &  & b_{d_2}  & & & \\
	& a_{d_1} & & \vdots & &  & & \vdots & \\
	& & \ddots & & & & \ddots & & \\
	& & & a_{d_1} & & & &  & b_{d_2}
\end{pmatrix}_{\raisebox{2pt}{.}}
\]
\end{definition}

\begin{definition}[resultant]
The \emph{resultant} of $f$ and $g$, denoted $\Res{f,g}$, is defined as
\[
\Res{f,g} := \det\bigl(\Syl(f,g)\bigr) \in R.
\]
\end{definition}

The resultant satisfies the following property.

\begin{lemma}\label{lemma:resultant-vanishing}
Let $R$ be an integral domain with field of fractions $\mathbb{L}$.
Then $\Res{f,g}=0$ if and only if $f$ and $g$ have a common root in $\overline{\mathbb{L}}$.
\end{lemma}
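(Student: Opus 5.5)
The plan is to prove both directions by relating the vanishing of the Sylvester determinant to the existence of a nontrivial linear relation $uf + vg = 0$ with $\deg u < d_2$ and $\deg v < d_1$. Throughout we work over $\mathbb{L}$, since $R \subseteq \mathbb{L}$ and the determinant is unchanged when we pass to the larger field. We may assume $d_1, d_2 \geq 1$. If, say, $d_2 = 0$, then $\Syl(f,g)$ is a $d_1\times d_1$ diagonal-type triangular matrix with $b_0$ on the diagonal, so $\Res{f,g} = b_0^{d_1} \neq 0$, and $g$ has no roots at all, so the statement holds.

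\textbf{Key step: the kernel of the Sylvester matrix.} The columns of $\Syl(f,g)$ are the coefficient vectors, in the basis $1, x, \ldots, x^{d_1+d_2-1}$, of the polynomials $f, xf, \ldots, x^{d_2-1}f$ and $g, xg, \ldots, x^{d_1-1}g$. So a vector in the kernel, with entries $(u_0,\ldots,u_{d_2-1},v_0,\ldots,v_{d_1-1})$, is exactly a pair $u=\sum u_i x^i$ and $v = \sum v_j x^j$ with $uf+vg=0$. Hence $\Res{f,g}=0$ if and only if such a pair $(u,v)\neq(0,0)$ exists in $\mathbb{L}[x]$ with $\deg u<d_2$ and $\deg v<d_1$.

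\textbf{($\Leftarrow$)} Suppose $f$ and $g$ have a common root $\alpha\in\overline{\mathbb{L}}$. Then $h = x-\alpha$ divides both in $\overline{\mathbb{L}}[x]$. Set $u = g/h$ and $v = -f/h$. These have degrees $d_2-1$ and $d_1-1$, are nonzero, and satisfy $uf+vg=0$. So $\Syl(f,g)$ is singular over $\overline{\mathbb{L}}$, and its determinant, which lies in $R$, is $0$.

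\textbf{($\Rightarrow$)} Suppose $\Res{f,g}=0$. Take the nontrivial pair $(u,v)$ over $\mathbb{L}$ from the key step. If $u=0$, then $vg=0$ forces $v=0$, a contradiction, so $u\neq 0$. Now $uf=-vg$. If $f$ and $g$ had no common root in $\overline{\mathbb{L}}$, they would be coprime in $\overline{\mathbb{L}}[x]$, so $g \mid uf$ would imply $g\mid u$. This is impossible since $0\le\deg u<d_2=\deg g$ and $u \neq 0$. Hence a common root exists.

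The only point needing care is the column-space interpretation, which uses the fact that the leading coefficients $a_{d_1}, b_{d_2}$ are nonzero so that the degrees are exactly $d_1$ and $d_2$. That fact is also where the degree bound in ($\Rightarrow$) comes from. I expect no real obstacle beyond this.
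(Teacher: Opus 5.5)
Your proof is correct. Note that the paper gives no proof of this lemma at all. It quotes it as a standard property of the resultant and uses it only to conclude, under Hypothesis~(H), that $f(\mathbf{0},y)$ splits into distinct linear factors over $\KK$. So there is no argument in the paper to compare yours against.

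Your argument is the classical one. You identify $\ker \Syl(f,g)$ with the nonzero pairs $(u,v)$ satisfying $uf+vg=0$, $\deg u<d_2$, $\deg v<d_1$. You then use coprimality in $\overline{\mathbb{L}}[x]$ to force a common root. You handle the degenerate case $d_1 d_2=0$ correctly. You also correctly locate where the nonzero leading coefficients matter, namely in the degree comparisons $\deg u<\deg g$ and $\deg(g/h)=d_2-1$. The orientation of the Sylvester matrix is irrelevant, since rearranging its rows and columns changes the determinant only by a sign.

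The other standard route uses the product formula $\Res{f,g}=\pm\, a_{d_1}^{d_2}b_{d_2}^{d_1}\prod_{i,j}(\alpha_i-\beta_j)$ over the roots in $\overline{\mathbb{L}}$. It makes the lemma immediate and gives more information, but that identity must be proved first. Your linear-algebra argument is shorter and fully self-contained.
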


\begin{definition}[formal power series]
Let $\FF$ be a field and let $\mathbf{x} = (x_1,\ldots,x_n)$ be indeterminates.
The ring of \emph{formal power series} $\FF[[\mathbf{x}]]$ consists of all infinite sums
\[
\sum_{\mathbf{i} \in \mathbb{N}^n} a_\mathbf{i} \mathbf{x}^\mathbf{i} \qquad (a_\mathbf{i} \in \FF),
\]
with addition and multiplication defined formally.
\end{definition}

\subsection{Prior Results}

We will use the optimal HSG (over large fields) of Guruswami and Xing.

\begin{theorem}[\cite{GXHittingSets}, Theorem 5.1]\label{thm:HSG construction}
    There exists an absolute constant $c$ such that 
    for any $n,d,q,\delta$, for which $q \geq c\cdot d/\delta$,
    there exists an efficiently computable HSG for $n$-variate polynomials of degree at most $d$ over $\FF_q$ with density $1-\delta$ and seed length $O(d\log n + \log (1/\delta))$.
\end{theorem}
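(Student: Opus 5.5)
The plan is to build the generator in two stages: a coarse hitting set with constant density, followed by a one-dimensional (curve) evaluation that drives the density up to $1-\delta$ while paying only an additive $O(\log(1/\delta))$ in the seed.

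\emph{Stage 1: a constant-density base generator.} First I would obtain a map $H_0\colon T_0\to\FF_q^n$ with $\log|T_0| = O(d\log n)$ such that every nonzero $f$ of degree at most $d$ satisfies $\Pr_{a\in T_0}[f(a)\neq 0]\ge 1/2$. This is possible once $q\ge c\,d$. One option is a variable-reduction step: substitute $x_i\mapsto$ a column of a code with good distance over $\FF_q$, so that $f$ becomes a nonzero polynomial in $O(d)$ variables. A reduction to a bounded number of variables is exactly what the seed budget $O(d\log n)$ permits. Schwartz--Zippel over a subset of $\FF_q$ of size $2d$ then gives density $1/2$. I would take such a base object as the black box.

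\emph{Stage 2: amplification along a low-degree curve.} Fix distinct $\theta_1,\ldots,\theta_k\in\FF_q$. For base points $a_1,\ldots,a_k$, let $C(\tau)$ be the unique curve of degree at most $k-1$ with $C(\theta_i)=a_i$, obtained by Lagrange interpolation coordinatewise. If some $f(a_i)\neq 0$, then $f\circ C$ is a nonzero univariate polynomial of degree at most $d(k-1)$. Hence a uniform $\tau\in\FF_q$ lands on a zero with probability at most $dk/q$. To make the event "all $f(a_i)=0$" have probability $2^{-\Omega(k)}$ without paying $k\cdot d\log n$ seed bits, I would generate $(a_1,\ldots,a_k)$ as $H_0$ applied to the vertices of a random walk on a constant-degree expander over $T_0$. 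By the expander hitting property this costs $O(d\log n + k)$ bits and fails with probability $2^{-\Omega(k)}$. With $k=\Theta(\log(1/\delta))$ the output is $C(\tau)$, the seed length is $O(d\log n+\log(1/\delta)+\log q)$, and the density is at least $1-\delta/2-dk/q$.

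\emph{Main obstacle.} The last term forces $q\gtrsim d\log(1/\delta)/\delta$, which is a $\log(1/\delta)$ factor worse than the claimed $q\ge c\,d/\delta$. Removing this factor is the crux of the theorem. It is where I expect the argument to need the algebraic-geometric structure used by Guruswami--Xing. Instead of a polynomial curve of degree $k$, one parameterizes by rational points of a curve with many points, for instance via the Riemann--Roch space of a divisor of degree $O(1)$ per "unit" of amplification. The pole order of $f\circ C$ then stays $O(d)$ while the number of independent-looking interpolation points still grows. The zero count then gives $O(d)/q$ instead of $dk/q$.

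Concretely, I would first try to keep the curve degree constant and push all the amplification into the choice of $\tau$. That means taking $\tau$ from a subset of rational points of an AG curve, sampled by the expander walk, and bounding the number of zeros of the nonzero function $f\circ C$ by its pole degree $O(d)$. That would yield failure probability $O(d)/q+\delta/2\le\delta$ whenever $q\ge c\,d/\delta$.
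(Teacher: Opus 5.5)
\textbf{Verdict: there is a genuine gap.} The proposal does not reach the claimed bound. The paper does not prove this statement; it imports it from Guruswami--Xing as a black box, so your proposal has to stand on its own. Stage~2 is sound, but as you say it gives density $1-\delta/2-dk/q$ with $k=\Theta(\log(1/\delta))$, i.e.\ it needs $q\gtrsim d\log(1/\delta)/\delta$. The suggested AG-curve repair cannot work as described:
\begin{enumerate}
\item The event you must amplify is $f\circ C\equiv 0$. When it occurs, every $\tau$ is a zero, so choosing $\tau$ cleverly (by an expander walk or otherwise) amplifies nothing. The amplification has to come from $C$ passing through $k$ base points.
\item For $C\in\mathcal{L}(G)^n$ this requires $\ell(G)\ge k$, hence $\deg G\ge k-1$, since $\ell(G)\le\deg G+1$. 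So ``constant pole degree'' and ``$k$ interpolation points'' are incompatible, and $f\circ C\in\mathcal{L}(dG)$ may again have $d(k-1)$ zeros.
\item For the pole-order bound to give a zero fraction $O(d/q)$, you would need $N=\Omega(kq)$ rational points, hence genus $\Omega(k\sqrt q)$ by Hasse--Weil. You would also need a divisor of degree $O(N/q)$, far below the genus, with $\ell(G)\ge k$. The proposal supplies nothing like this.
\end{enumerate}
Stage~1 is a second gap. No fixed affine substitution into fewer than $n$ variables preserves nonzeroness, because it kills an affine form vanishing on its image. So the base object must be a seeded family with density $1/2$ and seed $O(d\log n)$ over fields of size $O(d)$, which is precisely the $\delta=1/2$ case of the theorem. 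As written, you have a reduction to the constant-$\delta$ case, left unproved, that also loses a $\log(1/\delta)$ factor.

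\textbf{The missing idea} is to do all the lossy work over an extension field and then descend to $\FF_q$ through a \emph{full} grid, where the Schwartz--Zippel loss is $d/|A|$ independently of $k$. Concretely:
\begin{enumerate}
\item Let $m$ be least with $q^m\ge\mathrm{poly}(n,d,1/\delta)$, and fix an $\FF_q$-basis $\omega_1,\dots,\omega_m$ of $\FF_{q^m}$. A degree-$m$ irreducible polynomial can be found by exhaustive search in time polynomial in the size of the hitting set.
\item Over $\FF_{q^m}$, Stage~1 is easy. For example, a Shpilka--Volkovich-type reduction to $2d$ variables with degree at most $dn$, followed by Schwartz--Zippel, has seed $O(d\log n)$.
\item Your Stage~2, with $\tau$ drawn from a set of size $\Theta(dk/\delta)$, then produces $b\in\FF_{q^m}^n$ with $\Pr[f_{\mathrm{top}}(b)=0]\le\delta/2$, where $f_{\mathrm{top}}$ is the top homogeneous part of $f$.
\item Write $b=\sum_j\omega_ju_j$ with $u_j\in\FF_q^n$. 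Output $\sum_jt_ju_j$ for $t$ uniform in $A^m$, where $A\subseteq\FF_q$ and $|A|=\lceil 2d/\delta\rceil$.
\end{enumerate}
The top homogeneous part of $R(t)=f(\sum_jt_ju_j)$ is $f_{\mathrm{top}}(\sum_jt_ju_j)$, whose value at $(\omega_1,\dots,\omega_m)$ is $f_{\mathrm{top}}(b)\ne 0$. Hence $R$ is a nonzero polynomial of degree at most $d$ over $\FF_q$, and it vanishes on at most a $\delta/2$ fraction of $A^m$. The extra seed is $m\log|A|=O(\log(nd/\delta))$. The only field-size constraint left is $|A|\le q$, i.e.\ $q\ge c\,d/\delta$. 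This is the same idea as the paper's PHSGs: gain density over an extension field, then pay only a degree-$d$ Schwartz--Zippel loss over $\FF_q$.
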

We will also make use of the following lemma.
\begin{lemma}[\cite{dwivedi2024optimalpseudorandomgeneratorslowdegree}, Fact 2.4]\label{lemma:HSG remains in extension}
    Let $H$ be an HSG with density $1-\delta$ for polynomials of degree at most $d$ over a field $\FF$, and let $\KK$ be an extension of $\FF$. Then $H$ is also an HSG with density $1-\delta$ for polynomials of degree at most $d$ over $\KK$.
\end{lemma}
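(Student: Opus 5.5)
The statement is \cref{lemma:HSG remains in extension}. The plan is to reduce from $\KK$ to $\FF$ by expanding the coefficients of the polynomial in a basis of $\KK$ over $\FF$.

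Let $f \in \KK[x_1,\ldots,x_n]$ be nonzero of degree at most $d$. Its finitely many coefficients span a finite-dimensional $\FF$-subspace $V \subseteq \KK$. Choose an $\FF$-basis $e_1,\ldots,e_m$ of $V$. We can then write
\[
f = \sum_{j=1}^m e_j f_j, \qquad f_j \in \FF[x_1,\ldots,x_n],
\]
where each $f_j$ has degree at most $d$. Since $f \neq 0$, some $f_{j_0}$ is nonzero.

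Next, fix a seed $t$ with $f_{j_0}(H(t)) \neq 0$. The point $H(t)$ lies in $\FF^n$, so every value $f_j(H(t))$ lies in $\FF$. Hence
\[
f(H(t)) = \sum_{j} e_j f_j(H(t))
\]
is an $\FF$-linear combination of the linearly independent elements $e_j$. Its coefficient at $e_{j_0}$ is nonzero, so $f(H(t)) \neq 0$. Because $f_{j_0}$ is a nonzero polynomial over $\FF$ of degree at most $d$, the HSG property of $H$ gives
\[
\Pr_t[f(H(t)) \neq 0] \;\ge\; \Pr_t[f_{j_0}(H(t)) \neq 0] \;\ge\; 1-\delta.
\]

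There is no real obstacle. The only point to check is that the linear-independence argument applies: it needs $H(t) \in \FF^n$, which holds because $H$ outputs points over $\FF$, and it needs only finitely many basis elements, which holds because $f$ has finitely many coefficients. This covers infinite extensions such as $\overline{\FF_q(t)}$ as well.
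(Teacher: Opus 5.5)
Your proof is correct. It is the standard argument for this fact, which the paper does not prove but imports from \cite{dwivedi2024optimalpseudorandomgeneratorslowdegree}: expand $f$ along an $\FF$-basis of the span of its coefficients, pick a nonzero component $f_{j_0}\in\FF[\mathbf{x}]$ of degree at most $d$, and use the $\FF$-linear independence of the basis at the points $H(t)\in\FF^n$.
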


\subsection{Indecomposable Polynomials}
The analysis of our construction use the notion of \textit{indecomposable} polynomials and some of their properties.
    \begin{definition}
    A non-constant polynomial $f\in\FF[\mathbf{x}]$ is said to be \emph{decomposable} over $\FF$ if there exist $h\in\FF[\mathbf{x}]$ and a univariate polynomial $g\in\FF[z]$ such that $\deg(g)\geq 2$ and $f=g \circ h$.
    Otherwise, $f$ is said to be \emph{indecomposable}.
\end{definition}
We will use the following equivalences.
\begin{lemma}[\cite{Bodin_2009}, Theorem 4.2]\label{lemma:indecomposable over algebraic closure}
    A polynomial $f\in\FF_q[\mathbf{x}]$ is indecomposable over $\FF$ if and only if it is indecomposable over $\overline{\FF}$.
\end{lemma}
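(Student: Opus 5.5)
The direction "decomposable over $\FF$ $\Rightarrow$ decomposable over $\overline{\FF}$" is immediate, since a decomposition $f=g\circ h$ with $g\in\FF[z]$, $h\in\FF[\mathbf{x}]$ is also one over $\overline{\FF}$. For the converse, I plan to use Galois descent applied to the field generated by the inner polynomial $h$. Suppose $f=g\circ h$ with $g\in\overline{\FF}[z]$, $h\in\overline{\FF}[\mathbf{x}]$ and $\deg g\ge 2$. Only finitely many coefficients are involved, so all of them lie in a finite extension $\KK=\FF_{q^m}$ of $\FF=\FF_q$. This extension is Galois with cyclic group $G=\langle\sigma\rangle$, where $\sigma$ is Frobenius, and it is enough to descend from $\KK$ to $\FF$.

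\textbf{Step 1 (a canonical intermediate field).} Let $L\subseteq \KK(\mathbf{x})$ be the relative algebraic closure of $\KK(f)$ in $\KK(\mathbf{x})$. Since $f\in\KK(h)$ and $\KK(h)$ is algebraic over $\KK(f)$, we get $\KK(f)\subseteq \KK(h)\subseteq L$ with $[L:\KK(f)]\ge \deg g\ge 2$. By the multivariate Lüroth-type theorem for polynomials (Gordan/Igusa style; Bodin et al.\ use exactly this), $L=\KK(\tilde h)$ for some polynomial $\tilde h\in\KK[\mathbf{x}]$. Moreover $f\in \KK(\tilde h)\cap\KK[\mathbf{x}]=\KK[\tilde h]$. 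I will justify the last equality via degrees, or by noting that a rational function of a nonconstant polynomial that is itself a polynomial must be a polynomial in it.

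\textbf{Step 2 (Galois stability).} Let $G$ act on $\KK(\mathbf{x})$ through the coefficients. Because $f$ has coefficients in $\FF$, each $\tau\in G$ fixes $\KK(f)$, and therefore maps its relative algebraic closure $L$ to itself. Let $L_0:=L^G=L\cap\FF(\mathbf{x})$. Standard Galois descent (Speiser's lemma, i.e.\ Hilbert 90 for the semilinear $G$-action on the $\KK$-vector space $L$) gives $L=\KK\cdot L_0$. Since $\KK$ and $\FF(\mathbf{x})$ are linearly disjoint over $\FF$, this yields $[L_0:\FF(f)]=[L:\KK(f)]\ge 2$.

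\textbf{Step 3 (descending the decomposition).} $L_0$ is a subfield of $\FF(\mathbf{x})$ of transcendence degree $1$ over $\FF$ that contains $\FF(f)$ and is algebraically closed in $\FF(\mathbf{x})$. Applying the same Lüroth-type theorem over $\FF$ gives $L_0=\FF(h_0)$ with $h_0\in\FF[\mathbf{x}]$. Then $f\in \FF(h_0)\cap \FF[\mathbf{x}]=\FF[h_0]$, so $f=g_0\circ h_0$ with $\deg g_0=[\FF(h_0):\FF(f)]\ge 2$. Hence $f$ is decomposable over $\FF$.

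The main obstacle is Step 2: making the descent $L=\KK L_0$ rigorous and checking that the degree $[L_0:\FF(f)]$ is preserved, which requires a little care in characteristic $p$ regarding separability. If this route becomes messy, my fallback is a uniqueness argument. Normalize decompositions so that $h(\mathbf{0})=0$ and $h$ has a fixed monic leading form. Then show that for fixed $\deg h$ such a normalized $h$ is unique, by comparing homogeneous components degree by degree. Since $\sigma(g)\circ\sigma(h)$ is another normalized decomposition, this gives $\sigma(h)=h$, so $h\in\FF[\mathbf{x}]$, and then $g$ has coefficients in $\FF$ as well.
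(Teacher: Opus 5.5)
The paper does not prove this lemma; it imports it from Bodin--D\`ebes--Najib. Your write-up is therefore a self-contained replacement rather than a variant of an argument in the paper, and your main route is correct.

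The trivial direction is fine. For Step~1:
\begin{itemize}
\item The generalized L\"uroth theorem holds over arbitrary fields, finite ones included.
\item Upgrading to a \emph{polynomial} generator is the standard M\"obius change of generator, available once $L$ contains the nonconstant polynomial $f$.
\item The equality $\KK(\tilde h)\cap\KK[\mathbf{x}]=\KK[\tilde h]$ holds because $\gcd(P,Q)=1$ forces $\gcd(P(\tilde h),Q(\tilde h))=1$.
\end{itemize}
Your worry about separability in Step~2 is unnecessary. Speiser's lemma only needs $\KK/\FF$ to be finite Galois, and the degree identity is just $[L:L_0]=[\KK:\FF]=[\KK(f):\FF(f)]$. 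No separability of $L/\KK(f)$ enters.

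What the argument genuinely uses is that a decomposition over $\overline{\FF}$ is already defined over a \emph{Galois} extension of $\FF$, i.e.\ that $\FF_q$ is perfect. This is the right hypothesis. Over $k=\FF_p(a)$, the polynomial $x^p+ay^p$ is indecomposable: a decomposition would need $\deg h=1$ and would force $a$ to be a $p$-th power in $k$. Yet it equals $z^p\circ(x+a^{1/p}y)$ over $\overline{k}$. So your proof also explains why the lemma is tied to fields like $\FF_q$.

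Your fallback, by contrast, would fail as stated. When $\operatorname{char}\FF$ divides $\deg g$, normalized right components of a fixed degree need not be unique. For each of the $p+1$ subgroups $U\le\FF_{p^2}$ of order $p$, the additive polynomial $h_U=\prod_{u\in U}(x-u)$ is a right component of $x^{p^2}-x$. Hence $(x^{p^2}-x)\circ(x_1+x_2^2)\in\FF_p[x_1,x_2]$ has $p+1$ distinct normalized right components $h_U(x_1+x_2^2)$ of degree $2p$, some of them not defined over $\FF_p$, and the step $\sigma(h)=h$ breaks.

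The fallback does work in the tame case $p\nmid\deg g$. That covers the paper's use, since it assumes $\operatorname{char}\FF_q>d(d-1)$. It would also work if you took $h$ of minimal degree, but uniqueness of that $h$ up to affine maps is exactly what the canonical field $L$ of your main route supplies.
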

\begin{lemma}[\cite{cheze:hal-00629517}, Lemma 7]\label{lemma:indecomposable equivalence to irreducibility}
    Let $f\in\FF[\mathbf{x}]$ be a non-constant polynomial over a field $\FF$.
        Then $f$ is indecomposable over $\overline{\FF}$ iff $f - t$ is irreducible over $\overline{\FF(t)}$, where $t$ is a new formal variable.
\end{lemma}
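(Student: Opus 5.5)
The plan is to prove the two implications separately. The forward contrapositive is a direct factorization. The converse goes through the field extension $\overline{\FF}(f)\subseteq\overline{\FF}(\mathbf{x})$. Throughout, write $\KK=\overline{\FF(t)}$. I regard $f$ as transcendental over $\overline{\FF}$, so that $t\mapsto f$ identifies $\FF(t)$ with $\FF(f)$.

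\emph{Decomposable $\Rightarrow$ reducible.} Suppose $f=g\circ h$ over $\overline{\FF}$ with $\deg g=e\ge 2$ and leading coefficient $c\neq 0$. Over $\KK$ the univariate polynomial $g(z)-t$ splits as $c\prod_{i=1}^{e}(z-\theta_i)$ with $\theta_i\in\KK$. Substituting $z=h$ gives
\[
f-t=c\prod_{i=1}^{e}\bigl(h-\theta_i\bigr).
\]
Each factor has degree $\deg h\ge 1$ in $\mathbf{x}$. Hence $f-t$ is reducible over $\KK$.

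\emph{Indecomposable $\Rightarrow$ irreducible: reduction to separability.} I will use the standard criterion that for a finitely generated field extension $L/k$, the generic fiber is geometrically integral if and only if $L/k$ is separable and $k$ is algebraically closed in $L$. Applied to $k=\overline{\FF}(f)$ and $L=\overline{\FF}(\mathbf{x})$, this says $f-t$ is irreducible over $\KK$ exactly when both conditions hold. The extension is generated by $\mathbf{x}$ over $\FF$ with one relation. So (at least in the generic case) it is separable as soon as some partial derivative $\partial f/\partial x_i$ is nonzero. If instead all partials vanish, then in characteristic $p$ every exponent of $f$ is divisible by $p$. Because $\overline{\FF}$ is perfect, this gives $f=h^p$ for some $h\in\overline{\FF}[\mathbf{x}]$, so $f$ is decomposable. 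Thus for indecomposable $f$ separability holds, and it remains to prove that $\overline{\FF}(f)$ is algebraically closed in $\overline{\FF}(\mathbf{x})$.

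\emph{Algebraic closedness via a Lüroth-type argument.} Let $E$ be the algebraic closure of $\overline{\FF}(f)$ in $\overline{\FF}(\mathbf{x})$. Then $E$ has transcendence degree one over $\overline{\FF}$ and is algebraically closed in a purely transcendental extension. By the generalized Lüroth theorem (Gordan/Igusa), $E=\overline{\FF}(r)$ for some rational function $r$. The crucial extra step is to show that, because $E$ contains the nonconstant polynomial $f$, the generator $r$ can be chosen to be a polynomial $h$, and that $f=g(h)$ with $g$ univariate.

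For this I plan to write $r=a/b$ in lowest terms and apply a Möbius change of generator to move the point at infinity. I then compare pole orders along the hyperplane at infinity. Because $f=\phi(r)$ has no poles at finite points, the denominator of $r$ after normalization must be constant. The same pole comparison forces $\phi$ to be a polynomial $g$. Alternatively, one could cite the known fact that an algebraically closed subfield of trdeg one containing a nonconstant polynomial is generated by a polynomial.

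Once $E=\overline{\FF}(h)$ and $f=g(h)$, we have $[E:\overline{\FF}(f)]=\deg g$. Indecomposability forces $\deg g=1$, so $E=\overline{\FF}(f)$, which concludes the proof. I expect the main obstacle to be this polynomial version of Lüroth, and in particular the pole-order bookkeeping needed to make the generator polynomial, together with confirming that the separability criterion applies in positive characteristic.
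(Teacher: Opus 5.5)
Your proof is correct, and it takes a genuinely different route from the paper. The paper gives no argument of its own and simply imports the lemma from Ch\`eze.

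You reduce the hard direction to two properties of the extension $k=\overline{\FF}(f)\subseteq L=\overline{\FF}(\mathbf{x})$: separability, and algebraic closedness of $k$ in $L$. The link is the regularity criterion for geometric integrality of the generic fiber $\overline{\FF}(t)[\mathbf{x}]/(f-t)$. This fiber is integral with function field $L$, because $f-t$ has degree one in $t$. You then settle algebraic closedness using Igusa's L\"uroth theorem together with the polynomial-generator lemma.

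The citation buys brevity. Your route buys a self-contained, characteristic-free proof, and it explains why the lemma can be stated over an arbitrary field $\FF$, as the paper does. The only positive-characteristic obstruction is inseparability. That occurs exactly when $f\in\overline{\FF}[x_1^p,\dots,x_n^p]$, i.e.\ $f=h^p$, which is itself a decomposition.

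Two places should be firmed up.

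First, the hedge ``at least in the generic case'' is unnecessary. A $k$-derivation of $L$ is determined by a vector $v\in L^n$ with $\sum_i (\partial f/\partial x_i)\, v_i=0$. Hence $\dim_L\mathrm{Der}_k(L)=n-\operatorname{rank}(\partial f/\partial x_1,\dots,\partial f/\partial x_n)$. This equals $\operatorname{trdeg}_k L=n-1$, i.e.\ $L/k$ is separable, as soon as some partial derivative is nonzero.

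Second, the polynomial-generator step is really about finite fibers, not pole orders along the hyperplane at infinity. Write $f=\phi(r)$ with $r=a/b$ and $\gcd(a,b)=1$.
\begin{itemize}
\item Suppose $\beta\in\overline{\FF}$ is a pole of $\phi$ and $a-\beta b$ is nonconstant. Then its zero set, minus the codimension-$2$ set $V(a,b)$, contains a point where $r$ is regular with value $\beta$.
\item Write $\phi=\psi/(T-\beta)^m$ with $\psi(\beta)\neq 0$. Evaluating $f\cdot(r-\beta)^m=\psi(r)$ at that point gives $0=\psi(\beta)$, a contradiction. So $a-\beta b$ is constant; similarly, $b$ is constant if $\infty$ is a pole.
\item Two distinct poles would then force $r$ to be constant. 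Hence $\phi$ has exactly one pole.
\item A M\"obius change sending that pole to $\infty$ turns the generator into a polynomial $h$ and $\phi$ into a polynomial $g$.
\end{itemize}
Then $[\overline{\FF}(h):\overline{\FF}(f)]=\deg g$ finishes the argument as you describe.
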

Using the Weil bound, Derksen and Viola have showed in \cite{DV22} that indecomposable polynomials are approximately equidistributed over large enough fields.
\begin{lemma}[\cite{DV22}, Lemma 12]\label{thm:indecomposable implies equidistribution}
    There exists an absolute constant $c>0$ such that the following holds: Suppose $f\in\FF_q[\mathbf{x}]$ is indecomposable over $\FF_q$. Then $f(\mathbf{U}_{\FF_q^n})$ is $\varepsilon$-close to $\mathbf{U}_{\FF_q}$, where $\varepsilon=c \cdot d^2/\sqrt{q}$.
\end{lemma}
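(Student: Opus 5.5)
The plan is to go from the indecomposability of $f$ to absolute irreducibility of almost every fiber $f-a$. Then count points on each fiber with the Weil bound for curves, and sum the per-fiber errors over $a\in\FF_q$. Throughout I assume $q\ge d^4$, since otherwise $c\,d^2/\sqrt q\ge 1$ for $c\ge 1$ and the claim is trivial.

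\textbf{Step 1 (few bad fibers).} By \cref{lemma:indecomposable over algebraic closure}, $f$ is indecomposable over $\overline{\FF_q}$. By \cref{lemma:indecomposable equivalence to irreducibility}, $f-t$ is then irreducible over $\overline{\FF_q(t)}$. I will invoke Stein's theorem (in the Lorenzini/Najib form, valid in any characteristic): the number of $a\in\overline{\FF_q}$ for which $f-a$ is reducible over $\overline{\FF_q}$ is at most $d-1$. For each exceptional $a\in\FF_q$, Schwartz--Zippel gives $\Pr[f(\mathbf{x})=a]\le d/q$. These $a$ therefore contribute at most $(d-1)(d/q+1/q)=O(d^2/q)$ to the statistical distance, and $O(d^2/q)\le O(d^2/\sqrt q)$.

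\textbf{Step 2 (counting on good fibers via random planes).} Fix a good $a$, so $f-a$ is absolutely irreducible of degree $d$. Applying Lang--Weil directly in $n$ variables (e.g.\ Cafure--Matera) gives a lower-order term of size $d^{O(1)}q^{n-2}$. Summed over all $a$, this term is not obviously $O(d^2/\sqrt q)$. To avoid it, I will sample $\mathbf{x}$ as a uniform point of a uniformly random affine plane $E\subseteq\FF_q^n$; this is a uniform point of $\FF_q^n$, as long as the plane is sampled through a random base point. On each plane, $f|_E-a$ is a bivariate polynomial of degree at most $d$. Whenever it is absolutely irreducible, the Weil bound for plane curves gives
\[
\bigl|\#\{u\in E: f(u)=a\}-q\bigr|\le (d-1)(d-2)\sqrt q + O(d).
\]
For a bad plane, the count is at most $dq$ by Schwartz--Zippel.

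\textbf{Step 3 (effective Bertini, the main obstacle).} The key input is an effective Bertini theorem, e.g.\ Kaltofen's or Cafure--Matera's: for absolutely irreducible $f-a$, the fraction of planes $E$ on which $f|_E-a$ is not absolutely irreducible is $O(d^4/q)$. One also has to check that the degree does not drop. Averaging over planes then gives
\[
\Pr[f(\mathbf{x})=a]=\frac1q\pm O\!\left(\frac{d^2}{q^{3/2}}+\frac{d}{q^2}+\frac{d^4}{q}\cdot\frac{d}{q}\right).
\]
This last error term is too large when summed over all $q$ values of $a$. So I will not bound each $a$ separately for the bad-plane contribution. Instead, I will bound it jointly: the probability over $(E,u)$ that $E$ is bad for the particular value $a=f(u)$. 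Fix $E$ and bound the number of $a$ for which $f|_E-a$ is reducible. If $f|_E$ is indecomposable, Stein applied on $E$ gives at most $d-1$ such $a$, contributing $O(d^2/q)$. It then suffices that $f|_E$ is indecomposable for all but an $O(d^4/q)$ fraction of planes. This follows from effective Bertini applied to the single irreducible polynomial $f-t$ over $\overline{\FF_q(t)}$, together with \cref{lemma:indecomposable equivalence to irreducibility}; bad planes contribute at most their mass $O(d^4/q)\le O(d^2/\sqrt q)$.

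\textbf{Assembly.} Summing the good-plane Weil errors over the at most $q$ values of $a$ gives $O(d^2/\sqrt q)$. Adding the $O(d^2/q)$ contribution of bad fibers on good planes and the $O(d^4/q)$ mass of bad planes yields total distance at most $c\,d^2/\sqrt q$, as claimed. The delicate point is Step 3: getting a Bertini bound with polynomial dependence on $d$, uniformly over the generic parameter $t$, and organizing the sum so that bad planes are charged once rather than once per value $a$.
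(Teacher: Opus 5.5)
The paper does not prove this lemma. It is quoted as a black box from \cite{DV22}, so there is no in-paper argument to compare with, and your proposal has to stand on its own. It does.

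Stripped of detours, your proof is the plane-wise accounting you reach at the end of Step~3. Write $\Pr[f(\mathbf{x})=a]=\EE_{u}\Pr_{y_1,y_2}[f_u(y_1,y_2)=a]$, with $f_u(y_1,y_2)=f(u_0+y_1u_1+y_2u_2)$ and $u_0,u_1,u_2$ uniform in $\FF_q^n$. Uniformity of $u_0$ alone makes the point uniform, so degenerate directions need no separate treatment. Then bound $\sum_a|N_u(a)/q^2-1/q|$ as follows:
\begin{itemize}
\item by $2$ when $f_u$ is decomposable;
\item by $(d-1)(d-2)/\sqrt q+O(d^2/q)$ when $f_u$ is indecomposable, using Lorenzini's two-variable form of Stein's inequality and the Aubry--Perret version of the Weil bound. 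The latter is needed because $f_u-a$ may be singular.
\end{itemize}
With this, Step~1 and the per-fiber estimate in Step~3 are unnecessary.

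The point you flag as delicate is not an obstacle, assuming you use Kaltofen's effective Bertini theorem in the form you cite it. That form is stated for an arbitrary coefficient field $K$, with substitution coefficients drawn from an arbitrary finite $S\subseteq K$. Take $K=\FF_q(t)$, $S=\FF_q$, and the absolutely irreducible polynomial $f-t$. Then $f_u-t$ is irreducible over $\overline{\FF_q(t)}$ for all but an $O(d^4/q)$ fraction of $u$. Irreducibility forces $f_u$ to be nonconstant, so \cref{lemma:indecomposable equivalence to irreducibility} gives indecomposability of $f_u$ over $\overline{\FF_q}$.

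Two details are worth recording.
\begin{itemize}
\item If $f_u$ is indecomposable, no fiber can have the form $f_u-a=c\,g^m$ with $m\ge 2$, since then $f_u=(cz^m+a)\circ g$. So every reducible fiber is counted by Stein's inequality, whatever multiplicity convention is used.
\item All three imported theorems must be characteristic-free, since the lemma has no characteristic hypothesis. Lorenzini's inequality and Aubry--Perret are. For the Bertini input, cite a version explicitly valid in every characteristic, and check that Kaltofen's statement, as you use it, is.
\end{itemize}

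What your organization buys is that Bertini failure is paid once per plane rather than once per fiber. Your suspicion about the direct route is correct. Take a per-fiber explicit Lang--Weil bound such as Cafure--Matera's, whose lower-order term is $5d^{13/3}q^{n-2}$. Summing over the $q$ fibers leaves an extra $O(d^{13/3}/q)$, which exceeds $d^2/\sqrt q$ on the whole range $d^4\le q<d^{14/3}$. Your accounting replaces it by $O(d^4/q)$, which is at most $d^2/\sqrt q$ exactly when $q\ge d^4$; below that the lemma is trivial.

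Note also that an effective Bertini theorem with polynomial degree is the essential import. The paper's own Lecerf-based criterion (\cref{prop:HSG property for irreducibility}), used with uniform points of $\FF_q^n$, would only bound the failure probability by roughly $2^{d-1}(2d-1)/q$. It also applies only when $\operatorname{char}\FF_q>d(d-1)$.
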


\subsection{Gauss's Lemma}
Gauss’s lemma provides a fundamental link between irreducibility over an integral domain and irreducibility over its field of fractions.
\begin{definition}
    Let $R$ be a unique factorization domain (UFD). Let $f(x) = \sum_{i=0}^d c_ix^i \in R[x]$. The \textit{content} of $f$, denoted by $c(f)$, is the greatest common divisor of $c_0,\ldots , c_n$; the content is well defined up to invertible elements in $R$. $f$ is called primitive if $c(f) = 1$.
\end{definition}
\begin{lemma}[Gauss]\label{lemma:Gauss}
    Let $R$ be a UFD, and let $\mathbb{L}$ be its field of fractions. A non-constant polynomial $f \in R[x]$ is irreducible in $R[x]$ if and only if it is both irreducible in $\mathbb{L}[x]$ and primitive in $R[x]$.
\end{lemma}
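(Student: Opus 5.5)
The statement to prove is Gauss's lemma (\cref{lemma:Gauss}): for a UFD $R$ with fraction field $\mathbb{L}$, a non-constant $f\in R[x]$ is irreducible in $R[x]$ iff it is irreducible in $\mathbb{L}[x]$ and primitive. The plan is to prove the two directions separately. The only nontrivial input is the multiplicativity of content, $c(gh)=c(g)c(h)$ up to units; equivalently, a product of primitive polynomials is primitive.

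\textbf{Step 1 (multiplicativity of content).} Let $g,h\in R[x]$ be primitive and suppose a prime $p\in R$ divides every coefficient of $gh$. Since $R$ is a UFD, primes and irreducibles coincide, so $p$ is prime and $R/(p)$ is an integral domain. Hence $(R/(p))[x]$ is an integral domain. In it the image $\bar g\bar h=\overline{gh}$ is $0$, so $\bar g=0$ or $\bar h=0$. This contradicts primitivity of $g$ or $h$. For the general case, write $g=c(g)g_0$ and $h=c(h)h_0$ with $g_0,h_0$ primitive; then $c(gh)=c(g)c(h)$ up to units.

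\textbf{Step 2 ($\Leftarrow$).} Assume $f$ is primitive and irreducible in $\mathbb{L}[x]$, and write $f=gh$ in $R[x]$. Irreducibility over $\mathbb{L}$ forces one factor, say $g$, to have degree $0$, i.e.\ $g\in R$. Then $g$ divides every coefficient of $f$, so $g$ divides $c(f)=1$ and $g$ is a unit. Since $f$ is non-constant it is not a unit, so $f$ is irreducible in $R[x]$.

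\textbf{Step 3 ($\Rightarrow$).} Assume $f$ is irreducible in $R[x]$. Writing $f=c(f)f_0$ with $f_0$ non-constant (hence a non-unit) forces $c(f)$ to be a unit, so $f$ is primitive. Now suppose $f=GH$ in $\mathbb{L}[x]$ with $\deg G,\deg H\ge1$. Clearing denominators and extracting contents gives $G=aG_0$ and $H=bH_0$ with $a,b\in\mathbb{L}^\times$ and $G_0,H_0\in R[x]$ primitive. Then $f=(ab)G_0H_0$. Write $ab=r/s$ in lowest terms, so that $sf=rG_0H_0$. By Step 1, $G_0H_0$ is primitive, so taking contents of both sides gives $s\sim r$. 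Hence $ab$ is a unit of $R$, and $f=(abG_0)H_0$ is a factorization in $R[x]$ into two non-constant factors, contradicting irreducibility. Thus $f$ is irreducible in $\mathbb{L}[x]$.

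The main obstacle is Step 1, and its reduction mod $p$ argument handles it cleanly. The remaining care point in Step 3 is the content bookkeeping: the content comparison must use that $R$ is a UFD so that gcds exist, and the conclusion $s\sim r$ must be read up to units.
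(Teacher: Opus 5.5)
Your proof is correct. It is the standard textbook argument: multiplicativity of content via reduction modulo a prime $p$ (using that $(R/(p))[x]$ is a domain), followed by clearing denominators and comparing contents. The paper does not prove this lemma; it cites it as a classical fact, so there is no alternative route to compare against.

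Two cosmetic remarks:
\begin{itemize}
\item In Step~1, say explicitly that if $gh$ were not primitive, its content would be a nonzero non-unit and so would have a prime factor $p$ dividing every coefficient.
\item In Step~3, writing $ab=r/s$ in lowest terms is unnecessary, since $r\sim s$ alone already gives $r/s\in R^\times$.
\end{itemize}
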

The following corollary follows by \cref{lemma:Gauss}, together with the fact that the polynomial ring $\FF[x_1,\ldots,x_n]$ is a UFD for every $n$ and field $\FF$.
\begin{corollary}\label{lemma:multivariate Guass}
    Let $\FF$ be a field and let $R = \FF[x_1,\ldots, x_n]$, $\mathbb{L} = \FF(x_1, \ldots , x_n)$. Then, a multivariate polynomial $f \in \FF[\mathbf{x}, y]$ such that $f \not \in  \FF[\mathbf{x}]$ is irreducible in $\FF[\mathbf{x}, y]$ if and only if it is irreducible in $\FF(\mathbf{x})[y]$ and $c(f) = 1$.
\end{corollary}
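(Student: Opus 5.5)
We will derive \cref{lemma:multivariate Guass} from \cref{lemma:Gauss} by taking $R = \FF[x_1,\ldots,x_n]$ and regarding $f \in \FF[\mathbf{x},y]$ as a polynomial in the single variable $y$ over $R$. That is, we use the ring identification $\FF[\mathbf{x},y] = R[y]$.

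First we check the hypotheses of \cref{lemma:Gauss}. The ring $R$ is a UFD; this follows by induction on $n$ from the classical fact that $A[x]$ is a UFD whenever $A$ is a UFD, starting from the field $\FF$. The field of fractions of $R$ is $\mathbb{L} = \FF(x_1,\ldots,x_n)$. The assumption $f \notin \FF[\mathbf{x}]$ says exactly that $f$, viewed in $R[y]$, has positive degree in $y$. So $f$ is non-constant as an element of $R[y]$, which is precisely the hypothesis required by \cref{lemma:Gauss}.

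Next we match the three notions appearing in the statement. Irreducibility of $f$ in $\FF[\mathbf{x},y]$ is the same as irreducibility in $R[y]$, since the identification is a ring isomorphism and preserves units and factorizations. In both rings the units are exactly $\FF^\times$. Irreducibility in $\mathbb{L}[y]$ is literally irreducibility in $\FF(\mathbf{x})[y]$. The content $c(f)$ is the gcd in $R$ of the $y$-coefficients of $f$, so $c(f)=1$ means $f$ is primitive in $R[y]$. Applying \cref{lemma:Gauss} then gives the claimed equivalence in both directions.

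There is no genuine obstacle. The only points needing care are two. The content must be understood as the gcd of the coefficients with respect to $y$, taken in $R$. The condition $c(f)=1$ is meant up to units of $R$, that is, up to nonzero scalars in $\FF^\times$.
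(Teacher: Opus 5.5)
Your proposal is correct and follows the paper's own route: the paper obtains this corollary by applying \cref{lemma:Gauss} with $R=\FF[x_1,\ldots,x_n]$, which is a UFD, and viewing $f$ as an element of $R[y]$. Your additional checks fill in details the paper leaves implicit: that $f\notin\FF[\mathbf{x}]$ means $f$ has positive degree in $y$, and that $c(f)=1$ is the content with respect to $y$, taken up to units of $R$.
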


%% file: Evaluations_that_Retain_Irreducibility.tex
\section{Evaluations that Preserve Irreducibility}
In this section, we follow the main outline of \cite{DV22} and
\cite{dwivedi2024optimalpseudorandomgeneratorslowdegree}, and show that their approach
extends to the setting where the evaluation points are taken from a larger field.

\subsection{Hypothesis (H)}
In order to invoke Lecerf’s technique \cite{Lec07}, we need to assume that our polynomial meets two standard conditions. Lecerf called these conditions Hypothesis~(H). For an arbitrary polynomial, \cite{dwivedi2024optimalpseudorandomgeneratorslowdegree} showed that one can apply a suitable linear transformation to obtain a polynomial that satisfy Hypothesis~(H), provided a certain algebraic condition holds.
\begin{definition}[Hypothesis (H), \cite{Lec06,Lec07}]\label{def:hypothesis H}
    Let $f\in\FF[x_1,\dots,x_n,y]=\FF[\mathbf{x}, y]$ be a non-constant polynomial. We say $f$ satisfies \emph{Hypothesis~(H)} if
    \begin{enumerate}
    \item $f$ is monic in $y$ and $\deg_y(f)=\deg(f)$,
    \item $\Res{f(\mathbf{0},y),\frac{\partial f}{\partial y}(\mathbf{0},y)}\neq 0$.
    \end{enumerate}
\end{definition}

\begin{definition}[\cite{dwivedi2024optimalpseudorandomgeneratorslowdegree}]
    For $\mathbf{a}=(a_1,\dots,a_n)\in\FF^n$, 
    let $s_{\mathbf{a}}$ be the $\FF$-linear automorphism of $\FF[\mathbf{x}, y]$ that fixes $y$ and sends $x_i$ to $x_i+a_i y$.
\end{definition}
The following lemma provides an algebraic condition on $\mathbf{a}$ under which $s_{\mathbf{a}}(f) - t$ yields a polynomial that satisfies Hypothesis (H).
\begin{lemma}[\cite{dwivedi2024optimalpseudorandomgeneratorslowdegree}, Corollary 3.5]\label{cor:HSG for Hyptothesis (H)}
Assume that $f\in \FF[\mathbf{x}, y]$ is a polynomial of degree $d\geq 1$ and that $\mathrm{char}(\FF)$ is either zero or greater than $d$.
Then there exists a nonzero polynomial $B\in \FF[\mathbf{x}]$ of degree at most $d$ such that for every $\mathbf{a}\in\FF^n$ satisfying $B(\mathbf{a})\neq 0$, $s_{\mathbf{a}}(f)-t = c\cdot g$ where $c\in\FF^\times$ and $g\in\FF(t)[\mathbf{x}, y]$ is a degree-$d$ polynomial satisfying Hypothesis~(H).
\end{lemma}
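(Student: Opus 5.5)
The plan is to take $B$ to be the dehomogenized top-degree part of $f$, and then check the two conditions of Hypothesis~(H) directly. Write $f = f_d + f_{<d}$, where $f_d \in \FF[\mathbf{x},y]$ is the nonzero homogeneous component of degree $d$. Set
\[
B(\mathbf{x}) := f_d(x_1,\dots,x_n,1) \in \FF[\mathbf{x}].
\]
Clearly $\deg B \le d$. Moreover, $B \neq 0$: dehomogenization is injective on homogeneous polynomials, since $f_d$ can be recovered from $f_d(\mathbf{x},1)$ by rehomogenizing with $y$.

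\textbf{Condition 1.} Fix $\mathbf{a}$ with $B(\mathbf{a})\neq 0$.
\begin{itemize}
\item Since $s_{\mathbf{a}}$ is a graded linear automorphism, $s_{\mathbf{a}}(f)$ has total degree exactly $d$, and its degree-$d$ part is $f_d(\mathbf{x}+\mathbf{a}y,y)$.
\item The coefficient of $y^d$ in this part is obtained by setting $\mathbf{x}=0$, which gives $f_d(\mathbf{a}y,y)=f_d(\mathbf{a},1)\,y^d = B(\mathbf{a})\,y^d$.
\item Hence $s_{\mathbf{a}}(f)-t$, viewed in $\FF(t)[\mathbf{x},y]$, has total degree $d\ge 1$ (the term $-t$ is a constant), satisfies $\deg_y = d$, and has leading $y$-coefficient $c := B(\mathbf{a}) \in \FF^\times$.
\item Put $g := c^{-1}(s_{\mathbf{a}}(f)-t)$. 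Then $g$ has degree $d$, is monic in $y$, and $\deg_y g=\deg g$, as required.
\end{itemize}

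\textbf{Condition 2.} Let $\varphi(y) := f(\mathbf{a}y,y) \in \FF[y]$. Then
\[
g(\mathbf{0},y) = c^{-1}(\varphi(y)-t), \qquad \frac{\partial g}{\partial y}(\mathbf{0},y) = c^{-1}\varphi'(y).
\]
By the computation above, $\varphi$ has degree exactly $d$ with leading coefficient $c$. So $\varphi'$ has leading coefficient $d\,c$, which is nonzero because $\mathrm{char}(\FF)$ is $0$ or greater than $d$. Thus both polynomials have nonzero leading coefficients of degrees $d$ and $d-1$, and the resultant is well defined.

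Apply \cref{lemma:resultant-vanishing} with $R=\FF(t)$. Suppose the resultant vanished. Then there would be $r\in\overline{\FF(t)}$ with $\varphi'(r)=0$ and $\varphi(r)=t$.
\begin{itemize}
\item If $d=1$, $\varphi'$ is a nonzero constant and has no root, so this cannot happen.
\item If $d\ge 2$, then $\varphi'$ is a nonzero polynomial with coefficients in $\FF$, so $r$ is algebraic over $\FF$. Hence $\varphi(r)\in\overline{\FF}$ is algebraic over $\FF$, while $t$ is transcendental over $\FF$, a contradiction.
\end{itemize}
Therefore the resultant is nonzero, and $g$ satisfies Hypothesis~(H).

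\textbf{Main subtlety.} The only delicate point is making sure the degree and leading-coefficient bookkeeping matches the definition of the resultant over $\FF(t)$. This needs $\deg\varphi' = d-1$ exactly, which is precisely where the characteristic assumption enters; the $d=1$ case is handled separately above. The transcendence argument then makes the separability condition automatic, so no additional factor is needed in $B$.
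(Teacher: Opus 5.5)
Your proof is correct and follows essentially the same route as the cited Dwivedi--Guo--Volk argument (the paper itself imports this lemma without proof). As there, you take $B(\mathbf{x})=f_d(\mathbf{x},1)$ to make $s_{\mathbf a}(f)$ monic (up to the scalar $B(\mathbf a)$) with $y$-degree $d$, and the resultant condition holds automatically: any common root of $\varphi-t$ and $\varphi'$ would be algebraic over $\FF$, contradicting the transcendence of $t$, while the characteristic assumption ensures $\deg\varphi'=d-1$.
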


\subsection{Lecerf's Technique}
This subsection closely follows 
\cite[Section 4]{dwivedi2024optimalpseudorandomgeneratorslowdegree}, with minor adaptations to
the case of evaluating \(f \in \KK[\mathbf{x}]\) at points drawn from a larger field $\mathbb{L} / \KK$. As discussed in
\cref{subsubsec:lecerf technique} and \cref{subsubsec:our construction}, our goal is to formulate an algebraic
condition under which an irreducible polynomial in
\(\overline{\mathbb{F}_q(t)}[\mathbf{x}]\) remains irreducible after evaluation at elements
from a field extension. We refer to such evaluation points as \emph{Bertinian points}.
\begin{definition}
    Let $f\in\KK[\mathbf{x}, y]$ be a non-constant polynomial satisfying Hypothesis~(H), and let $\mathbb{L} / \mathbb{K}$ be a field extension. We say $\mathbf{a}=(a_1,\dots,a_n)\in\mathbb{L}^n$ is a \emph{Bertinian good point} for $f$ if for every irreducible factor $\tilde{f}$ of $f$ over $\mathbb{L}$, the bivariate polynomial $\tilde{f}_{\mathbf{a}}(x, y)=\tilde{f}(a_1 x,\dots,a_n x, y)$ is also irreducible over $\mathbb{L}$.
    Otherwise, $\mathbf{a}$ is called a \emph{Bertinian bad point}.
\end{definition}
We now begin to describe Lecerf’s condition for a point to be Bertinian.
\begin{lemma}[\cite{dwivedi2024optimalpseudorandomgeneratorslowdegree}, Lemma 2.9, Hensel's Lifting]\label{lemma:Hensel lifting}
    Let $f\in\FF[x_1,\dots,x_n,y]=\KK[\mathbf{x},y]$ be a nonzero polynomial. Suppose $\bar{\lambda}\in \KK$ is a simple root of $f(\mathbf{0},y)\in\FF[y]$.
Then there exists unique $\lambda\in\FF[[\mathbf{x}]]$ such that
\begin{enumerate}
    \item $f(\mathbf{x},\lambda)=0$, i.e., $\lambda$ is a root of $f$ as a univariate polynomial in $y$ over $\KK[\mathbf{x}]$, and
    \item $\lambda(\mathbf{0})=\bar{\lambda}$.
\end{enumerate}
\end{lemma}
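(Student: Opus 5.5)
Write $\mathbf{x}=(x_1,\dots,x_n)$ and view $f\in\KK[\mathbf{x}][y]$ as a polynomial in $y$. The plan is to build $\lambda$ coefficient by coefficient in $\KK[[\mathbf{x}]]$, doing induction on total degree with respect to the ideal $\mathfrak{m}=(x_1,\dots,x_n)$. This is the standard Newton/Hensel argument. The key input is that $\bar\lambda$ is a simple root of $f(\mathbf{0},y)$, so
\[
c:=\tfrac{\partial f}{\partial y}(\mathbf{0},\bar\lambda)\neq 0 .
\]
Note that the statement's $\KK$ and $\FF$ both denote the base field here.

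\textbf{Inductive step.} We construct $\lambda_k\in\KK[\mathbf{x}]$ of degree at most $k$ such that
\[
f(\mathbf{x},\lambda_k)\equiv 0 \pmod{\mathfrak{m}^{k+1}}, \qquad \lambda_k(\mathbf{0})=\bar\lambda, \qquad \lambda_{k}\equiv\lambda_{k-1}\pmod{\mathfrak{m}^{k}}.
\]
Start with $\lambda_0=\bar\lambda$, which works because $f(\mathbf{0},\bar\lambda)=0$. Given $\lambda_{k-1}$, we look for $\lambda_k=\lambda_{k-1}+\mu$ with $\mu$ homogeneous of degree $k$. Taylor expansion in $y$ (valid in any characteristic, using Hasse derivatives or just the polynomial identity $f(\mathbf{x},y+\mu)=f(\mathbf{x},y)+\mu\,\partial_y f(\mathbf{x},y)+\mu^2(\cdots)$) gives
\[
f(\mathbf{x},\lambda_{k-1}+\mu)\equiv f(\mathbf{x},\lambda_{k-1})+\mu\,\partial_yf(\mathbf{x},\lambda_{k-1}) \pmod{\mathfrak{m}^{2k}}.
\]
Since $\partial_yf(\mathbf{x},\lambda_{k-1})\equiv c\pmod{\mathfrak{m}}$, we get modulo $\mathfrak{m}^{k+1}$ the congruence $f(\mathbf{x},\lambda_{k-1})+c\mu$. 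By hypothesis $f(\mathbf{x},\lambda_{k-1})\in\mathfrak{m}^k$. So the choice $\mu=-c^{-1}\cdot(\text{degree-}k\text{ homogeneous part of } f(\mathbf{x},\lambda_{k-1}))$ is forced and it works.

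\textbf{Existence.} The $\lambda_k$ are compatible, so they define a limit $\lambda\in\KK[[\mathbf{x}]]$. For each $k$ we have $f(\mathbf{x},\lambda)\equiv f(\mathbf{x},\lambda_k)\equiv 0 \pmod{\mathfrak{m}^{k+1}}$, since $f$ is a polynomial and substitution respects the $\mathfrak{m}$-adic filtration. Hence $f(\mathbf{x},\lambda)=0$, and $\lambda(\mathbf{0})=\bar\lambda$ by construction.

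\textbf{Uniqueness.} Suppose $\lambda,\lambda'$ both satisfy the two conditions, and let $\delta=\lambda'-\lambda\neq 0$ have lowest-degree homogeneous part of degree $k\ge 1$; then $\delta\in\mathfrak{m}^k\setminus\mathfrak{m}^{k+1}$. The same expansion gives
\[
0=f(\mathbf{x},\lambda')-f(\mathbf{x},\lambda)=\delta\bigl(\partial_yf(\mathbf{x},\lambda)+\delta\cdot(\cdots)\bigr).
\]
The second factor is congruent to $c\neq 0$ modulo $\mathfrak{m}$, so it is a unit in $\KK[[\mathbf{x}]]$. Therefore $\delta=0$, a contradiction. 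Equivalently, one can observe that the forced choice of $\mu$ at each step pins down every homogeneous component.

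The only delicate point is justifying the Taylor expansion and the $\mathfrak{m}$-adic convergence in positive characteristic. Both are routine: the expansion is an identity in $\KK[\mathbf{x},y,\mu]$, and $\KK[[\mathbf{x}]]$ is complete with respect to $\mathfrak{m}$. So no real obstacle is expected.
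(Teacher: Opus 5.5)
Your proof is correct and complete. It is the standard Newton--Hensel lifting: the forced degree-$k$ correction $\mu=-c^{-1}\cdot(\text{degree-}k\text{ part of }f(\mathbf{x},\lambda_{k-1}))$, passage to the limit in $\KK[[\mathbf{x}]]$, and uniqueness because $\partial_y f(\mathbf{x},\lambda)+\delta(\cdots)$ is a unit. This is exactly what the paper relies on, since it imports the lemma from \cite{dwivedi2024optimalpseudorandomgeneratorslowdegree} without giving its own proof. Note also that your uniqueness argument works verbatim in $\mathbb{L}[[\mathbf{x}]]$ for any extension $\mathbb{L}/\KK$, which is the form the paper needs later to argue that the lifts $\lambda_i$ are unchanged when $f$ is viewed over $\mathbb{L}$.
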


Let $\mathbb{L} / \KK$ be an extension of algebraically closed fields. Let $f \in \KK[\mathbf{x}, y]$ be a polynomial of degree $d \geq 1$ satisfying Hypothesis (H).
Define $\bar{f}:=f(\mathbf{0}, y)\in \KK[y]$.
As $\KK$ is algebraically closed and $\Res{f(\mathbf{0},y),\frac{\partial f}{\partial y}(\mathbf{0},y)}\neq 0$, the univariate polynomial $\bar{f}$ factorizes into distinct linear factors
\[
\bar{f}(y)=\prod_{i=1}^d  (y -\bar{\lambda}_i) 
\]
where $\bar{\lambda}_i\in\KK$ for all $i\in [d]$.
By \cref{lemma:Hensel lifting}, the above factorization of $\bar{f}$ over $\KK$ lifts to a factorization of $f$ into distinct linear factors
\[
f(\mathbf{x}, y)=\prod_{i=1}^d (y-\lambda_i(\mathbf{x})),
\] 
where $\lambda_i\in \KK[[\mathbf{x}]]$ and $\lambda_i(\mathbf{0})=\bar{\lambda}_i$ for all $i\in [d]$.

We now introduce new variables $\mathbf{z}=(z_1,\dots,z_n)$ and $x$, and define $g:=f(z_1 x ,\dots, z_n x, y)\in \KK[\mathbf{z},x,y]$. Then, $g$ factorizes into linear factors
\[
g(\mathbf{z}, x, y)=\prod_{i=1}^d (y-\lambda_i(z_1 x, \dots, z_n x))
\]
where each $$\lambda_i(z_1 x, \dots, z_n x) \in \KK[[\mathbf{z}]][[x]].$$
For $i\in [d]$, let $g_i$ be the factor $y-\lambda_i(z_1 x, \dots, z_n x)$ of $g$, and let $\hat{g}_i$ be its cofactor $\prod_{j\in [d]\setminus\{i\}} g_j$. So $$g_i,\hat{g}_i\in \KK[[\mathbf{z}]][[x]][y].$$ 
For $h\in A[[x]][y]$ over a commutative ring $A$ and $(j,k)\in\mathbb{N}^2$, denote by $\coeff{h, x^j y^k}\in A$ the coefficient of $x^j y^k$ in $h$.
We are now ready to define the linear system $D_{\mathbf{z},\sigma}$ used in Lecerf's papers.

\begin{definition}[{The linear system $D_{\mathbf{z},\sigma}$}]\label{defn:system}
Let $\sigma\in \mathbb{N}$.
Define $D_{\mathbf{z},\sigma}$ to be the following linear system over $\KK(\mathbf{z})$ in the unknowns $\ell_1,\dots,\ell_d$:
\[
D_{\mathbf{z},\sigma}
\begin{cases}
\sum_{i=1}^d \coeff{\hat{g}_i \frac{\partial g_i}{\partial y}, x^j y^k}\cdot \ell_i=0,\quad k\leq d-1,~d\leq j+k\leq \sigma-1,\\
\sum_{i=1}^d \coeff{\hat{g}_i \frac{\partial g_i}{\partial x}, x^j y^k}\cdot \ell_i=0,\quad k\leq d-1,~j\leq \sigma-2,~d\leq j+k\leq \sigma-1.
\end{cases}
\]
\end{definition}

We have the following lemma.
\begin{lemma}[\cite{dwivedi2024optimalpseudorandomgeneratorslowdegree}, Lemma 4.2]\label{lem:degree-bound}
For $(j,k) \in \mathbb{N}^2$,
\[\coeff{\hat{g}_i \frac{\partial g_i}{\partial x}, x^j y^k},\coeff{\hat{g}_i \frac{\partial g_i}{\partial y}, x^j y^k}\in\KK[\mathbf{z}]\] are polynomials of degree at most $j+1$ and $j$ respectively.
\end{lemma}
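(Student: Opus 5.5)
We need to show that the coefficient of $x^j y^k$ in $\hat g_i\,\partial g_i/\partial x$ is a polynomial in $\mathbf{z}$ of degree at most $j+1$, and that the corresponding coefficient of $\hat g_i\,\partial g_i/\partial y$ has degree at most $j$. The plan is to track homogeneity: substituting $x_m \mapsto z_m x$ attaches at most one power of $\mathbf{z}$ to each power of $x$. Then we check that this bookkeeping survives products, $y$-derivatives, and a single $x$-derivative.

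\textbf{Step 1: grading of the power series.} Write each Hensel root as $\lambda_i(\mathbf{x}) = \sum_{\mathbf{e}} c_{i,\mathbf{e}} \mathbf{x}^{\mathbf{e}}$ with $c_{i,\mathbf{e}} \in \KK$. Then
\[
\lambda_i(z_1x,\dots,z_nx) = \sum_{m\ge 0} x^m \Lambda_{i,m}(\mathbf{z}), \qquad \Lambda_{i,m}(\mathbf{z}) = \sum_{|\mathbf{e}|=m} c_{i,\mathbf{e}}\mathbf{z}^{\mathbf{e}}.
\]
So $\Lambda_{i,m}$ is a homogeneous polynomial of degree $m$ in $\mathbf{z}$. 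It is a finite sum, since there are finitely many $\mathbf{e}$ with $|\mathbf{e}|=m$, so it really lies in $\KK[\mathbf{z}]$.

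Call an element $h = \sum_{j,k} h_{j,k}(\mathbf{z})\, x^j y^k$ of $\KK[[\mathbf{z}]][[x]][y]$ \emph{$c$-bounded} if every $h_{j,k}$ is a polynomial with $\deg h_{j,k} \le j + c$.

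\textbf{Step 2: which operations preserve boundedness.} By Step 1, each factor $g_l = y - \sum_m x^m \Lambda_{l,m}$ is $0$-bounded: the coefficient of $x^m y^0$ has degree $m$, and $y$ has coefficient $1$ at $j=0$.

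Products are controlled additively. If $h$ is $c$-bounded and $h'$ is $c'$-bounded, the coefficient of $x^j y^k$ in $hh'$ is
\[
\sum_{j_1+j_2=j,\;k_1+k_2=k} h_{j_1,k_1}\, h'_{j_2,k_2}.
\]
This is a finite sum, because $j$ and $k$ are fixed. Each term has degree at most $j_1 + c + j_2 + c' = j + c + c'$. Hence $hh'$ is $(c+c')$-bounded, and in particular the product $\hat g_i$ of $0$-bounded factors is $0$-bounded.

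The two derivatives behave differently. Since $\partial g_i/\partial y = 1$, the product $\hat g_i\,\partial g_i/\partial y = \hat g_i$ is $0$-bounded. This gives the bound $j$ for the second coefficient.

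\textbf{Step 3: the $x$-derivative (the only real point).} We have
\[
\frac{\partial g_i}{\partial x} = -\sum_{m\ge1} m\, x^{m-1}\Lambda_{i,m}(\mathbf{z}).
\]
Its coefficient at $x^{m-1}$ has degree at most $m = (m-1)+1$, so $\partial g_i/\partial x$ is $1$-bounded. The shift by one is precisely the $+1$ in the statement.

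By Step 2, the product $\hat g_i\,\partial g_i/\partial x$ is then $1$-bounded, which gives degree at most $j+1$. This completes the argument.
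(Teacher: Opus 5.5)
Your proof is correct: substituting $x_m \mapsto z_m x$ pairs each power $x^m$ with a polynomial in $\mathbf{z}$ that is homogeneous of degree $m$, this bound adds up under products, $\partial g_i/\partial y = 1$, and $\partial/\partial x$ raises the bound by exactly one. The paper does not prove this lemma itself but imports it from \cite{dwivedi2024optimalpseudorandomgeneratorslowdegree}, and your argument is essentially the same homogeneity observation that underlies the cited proof.
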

Observe that the construction of $D_{\mathbf{Z}, \sigma}$ carries over when we view $f \in \mathbb{L}[\mathbf{x}, y]$ (as opposed to $\mathbb{K}[\mathbf{x},y]$). Indeed, the elements $\bar{\lambda}_i \in \KK$ remain unchanged. By the uniqueness of $\lambda$ in \cref{lemma:Hensel lifting}, the lifting to $$\lambda_i \in \KK[\mathbf{x}] \subseteq \mathbb{L}[\mathbf{x}]$$ is therefore unchanged, and consequently $D_{\mathbf{z}, \sigma}$ remains the same. Thus, the entire construction can be regarded as in the case of $f \in \mathbb{L}[\mathbf{x},y]$.  
Moreover, the coefficients of $D_{\mathbf{z}, \sigma}$ lie in $\KK[\mathbf{z}]$. Applying \cite[Lemma 4.6]{dwivedi2024optimalpseudorandomgeneratorslowdegree} to $f$ viewed in $\mathbb{L}[\mathbf{x}, y]$, and noting that the polynomials $Q_i$ are sums of entries of $D_{\mathbf{z}, \sigma}$, we obtain the following theorem.
\begin{lemma}\label{prop:HSG property for irreducibility}
    Let $\KK$ be an algebraically closed field with $\operatorname{char}\KK = 0$ or greater than $d(d-1)$, and let $\mathbb{L} / \KK$ be some extension which is algebraically closed. Let $f \in \KK[\mathbf{x}, y]$ be an irreducible polynomial of degree $d \geq 1$ satisfying Hypothesis~(H). Let $m=2^{d-1}-1$. Then, there exist nonzero polynomials $Q_1,\dots,Q_m\in \KK[\mathbf{z}]=\KK[z_1,\dots,z_n]$ of degree at most $2d-1$ such that for every Bertinian bad point $\mathbf{a}\in\mathbb{L}^n$ for $f$ over $\mathbb{L}$, at least one polynomial $Q_i$ vanishes at $\mathbf{a}$.
\end{lemma}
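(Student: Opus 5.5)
The plan is to reduce the statement to \cite[Lemma 4.6]{dwivedi2024optimalpseudorandomgeneratorslowdegree}, which gives exactly this conclusion when the Bertinian bad point lies in $\KK^n$. The work is in showing that the whole argument can be run over $\mathbb{L}$ instead of $\KK$, and that the resulting $Q_i$ still have coefficients in $\KK$.

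\textbf{Step 1: transfer hypotheses to $\mathbb{L}$.} View $f$ as an element of $\mathbb{L}[\mathbf{x},y]$.
\begin{itemize}
\item Hypothesis~(H) is a condition on coefficients and on a resultant, both of which lie in $\KK$, so it continues to hold over $\mathbb{L}$.
\item The characteristic is unchanged, so $\operatorname{char}\mathbb{L}=0$ or $\operatorname{char}\mathbb{L}>d(d-1)$.
\item Irreducibility over $\mathbb{L}$: since $\KK$ is algebraically closed, an irreducible polynomial over $\KK$ is absolutely irreducible, hence stays irreducible over the algebraically closed extension $\mathbb{L}$. This is the standard fact that absolute irreducibility is preserved under extensions of algebraically closed fields, for example via the Noether forms/Ostrowski argument or via Lecerf's linear-system characterization itself.
\end{itemize}
Thus all hypotheses of \cite[Lemma 4.6]{dwivedi2024optimalpseudorandomgeneratorslowdegree} hold for $f$ over $\mathbb{L}$.

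\textbf{Step 2: invoke the lemma over $\mathbb{L}$.} This yields $m=2^{d-1}-1$ nonzero polynomials $Q_1,\dots,Q_m\in\mathbb{L}[\mathbf{z}]$ of degree at most $2d-1$. Every Bertinian bad point $\mathbf{a}\in\mathbb{L}^n$ for $f$ over $\mathbb{L}$ is a zero of some $Q_i$.

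Recall how these $Q_i$ arise. By Lecerf's result, irreducibility of $f$ means the solution space of $D_{\mathbf{z},\sigma}$ (for a suitable $\sigma$, e.g.\ $\sigma=2d$) is spanned by the all-ones vector. A point $\mathbf{a}$ is bad exactly when $D_{\mathbf{a},\sigma}$ acquires an extra $0/1$ solution. For each such candidate vector $w$, up to complementation (which gives the count $2^{d-1}-1$), one picks an equation of $D_{\mathbf{z},\sigma}$ violated by $w$. The polynomial $Q_w$ is the sum of the coefficients in that equation indexed by the support of $w$. By \cref{lem:degree-bound}, with $j\le\sigma-2$, each $Q_w$ has degree at most $2d-1$.

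\textbf{Step 3: the coefficients lie in $\KK$.} This is where the observation preceding the statement is used:
\begin{itemize}
\item The roots $\bar\lambda_i$ of $\bar f$ lie in $\KK$.
\item By the uniqueness in \cref{lemma:Hensel lifting}, the lifts $\lambda_i$ computed over $\mathbb{L}$ coincide with those computed over $\KK$, so $\lambda_i\in\KK[[\mathbf{x}]]$.
\item Hence $g_i$ and $\hat g_i$ have coefficients in $\KK[[\mathbf{z}]][[x]]$, and the entries of $D_{\mathbf{z},\sigma}$ lie in $\KK[\mathbf{z}]$.
\end{itemize}
Since each $Q_i$ is a $0/1$-combination (a sum) of entries of $D_{\mathbf{z},\sigma}$, it follows that $Q_i\in\KK[\mathbf{z}]$. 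Nonzeroness is a property of the polynomial itself, independent of the ambient field, so it holds in $\KK[\mathbf{z}]$ as well.

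\textbf{Main obstacle.} I expect the delicate points to be the following two.
\begin{itemize}
\item Confirming that the proof of \cite[Lemma 4.6]{dwivedi2024optimalpseudorandomgeneratorslowdegree} genuinely produces the $Q_i$ as sums of entries of $D_{\mathbf{z},\sigma}$, rather than, say, minors or other derived expressions. If it were minors, the argument would still go through: determinants of matrices with entries in $\KK[\mathbf{z}]$ lie in $\KK[\mathbf{z}]$, though the degree bound would need rechecking.
\item Justifying the irreducibility transfer from $\KK$ to $\mathbb{L}$. If I want to avoid citing the general fact, I would argue through Lecerf's characterization. Irreducibility over $\KK$ means the solution space of $D_{\mathbf{z},\sigma}$ over $\KK(\mathbf{z})$ is one-dimensional. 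The system has coefficients in $\KK(\mathbf{z})$, so its rank is unchanged over $\mathbb{L}(\mathbf{z})$, and the solution space there is still one-dimensional. Hence $f$ is irreducible over $\mathbb{L}$.
\end{itemize}
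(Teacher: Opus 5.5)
Your proposal is correct and follows essentially the same route as the paper: apply \cite[Lemma 4.6]{dwivedi2024optimalpseudorandomgeneratorslowdegree} to $f$ viewed in $\mathbb{L}[\mathbf{x},y]$. Then use the uniqueness in \cref{lemma:Hensel lifting} to see that the entries of $D_{\mathbf{z},\sigma}$ lie in $\KK[\mathbf{z}]$, and hence so do the $Q_i$, which are sums of such entries. The paper invokes the lemma over $\mathbb{L}$ without comment, whereas you also check explicitly that irreducibility and Hypothesis~(H) carry over from $\KK$ to $\mathbb{L}$.
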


%% file: HSG_with_polynomial_evaluations.tex
\section{HSG with Polynomial Evaluation Points}\label{sec:PHSG}
In this section we construct our \emph{polynomial hitting set generators}, and specifically, show that \emph{any} hitting set generator for polynomials over a field
extension \(\mathbb{F}_{q^k}\) can be simply turned into a polynomial hitting
set generator (PHSG) over \(\mathbb{F}_q\). Then we use samplers to construct a field extension of $\FF_q$ efficiently using only a small amount of random bits.

The main advantage of PHSGs over HSGs is their ability to achieve much higher density,
namely $1-\Theta(d/q^k)$ rather than $1-\Theta(d/q)$.

We first recall the definition of a PHSG.
\PHSGDefinition
Our approach for constructing PHSGs is using a HSG for a field extension $$\mathbb{E} = \FF[w_1,\ldots,w_\ell] / P,$$ where $P \lhd \FF[w_1,\ldots , w_\ell]$ is a maximal ideal. Assume that in every equivalence class $g + P \in \mathbb{E}$ there exists an element $g' \in \FF[w_1,\ldots,w_\ell]$ of total degree at most $h$. Let $\{g_1 + P,\ldots,g_k + P\} \subseteq \mathbb{E}$ be a basis of $\mathbb{E}/\mathbb{F}$, and for all $1 \leq i \leq k$ let $g_i' \in \FF[w_1,\ldots,w_\ell]$ be an element in the equivalence class $g_i + P$ of total degree at most $h$.
Let
\[
    \varphi \colon \mathbb{E} \longrightarrow \FF^{\leq h}[w_1, \ldots , w_\ell]
\]
be the $\FF$-linear map defined uniquely by $\varphi(g_i + P) = g_i'$ for all $1 \leq i \leq k$.
Let $\pi \colon \FF[w_1, \ldots , w_\ell] \rightarrow \mathbb{E}$ be the ring homomorphism $a \mapsto a \mod P$. For all $a \in \mathbb{E}$ we have
\[
    \pi(\varphi(a)) = a.
\]
With this notation in place, we show that an HSG over $\mathbb{E}$ can be regarded as a PHSG. Whenever we apply $\varphi$ or $\pi$ to a vector, we apply the map coordinate-wise.
\begin{claim}\label{claim:HSG for univariate polynomials}
    Let $\widehat{H} \colon S \rightarrow \mathbb{E}^n$ be a HSG with density $1-\delta$ for $n$-variate polynomials over $\mathbb{E}$ of degree at most $d$. Then, $$H = \varphi \circ \widehat{H} \colon S \rightarrow (\FF^{\leq h}[w_1,\ldots,w_\ell])^n$$ is a PHSG with density $1-\delta$.
\end{claim}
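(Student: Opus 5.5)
The plan is to move the problem from polynomials over $\FF$ evaluated at polynomial points to polynomials over $\mathbb{E}$ evaluated at field points, using the ring homomorphism $\pi$, and then invoke the hitting property of $\widehat{H}$ directly. Fix a nonzero $f \in \FF[x_1,\ldots,x_n]$ of degree at most $d$. The event we must bound is $f(H(y)) = 0$ in $\FF[w_1,\ldots,w_\ell]$, where $H(y) = \varphi(\widehat{H}(y))$ is a vector of polynomials of degree at most $h$.

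First, since $\pi$ is an $\FF$-algebra homomorphism and $f$ has coefficients in $\FF$, $\pi$ commutes with evaluating $f$. That is, for any $\mathbf{g} \in \FF[w_1,\ldots,w_\ell]^n$ we have $\pi(f(\mathbf{g})) = f(\pi(\mathbf{g}))$, where on the right $f$ is viewed as a polynomial over $\mathbb{E}$ through the embedding $\FF \hookrightarrow \mathbb{E}$. Applying this with $\mathbf{g} = \varphi(\widehat{H}(y))$ and using $\pi\circ\varphi = \mathrm{id}_{\mathbb{E}}$ coordinatewise gives
\[
\pi\bigl(f(H(y))\bigr) = f\bigl(\pi(\varphi(\widehat{H}(y)))\bigr) = f\bigl(\widehat{H}(y)\bigr).
\]
Consequently, if $f(H(y)) = 0$ in $\FF[w_1,\ldots,w_\ell]$, then $f(\widehat{H}(y)) = \pi(0) = 0$ in $\mathbb{E}$. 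So the event $\{y : f(H(y))=0\}$ is contained in $\{y : f(\widehat{H}(y)) = 0\}$.

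Second, $f$, viewed in $\mathbb{E}[x_1,\ldots,x_n]$, is still nonzero and of degree at most $d$, because the map $\FF\to\mathbb{E}$ is injective (as $\mathbb{E}$ is a field containing $\FF$; note $P$ is maximal and proper, so $\pi|_\FF$ is injective). The HSG property of $\widehat{H}$ then gives $\Pr_{y\in S}[f(\widehat{H}(y))=0]\le\delta$, and by the containment the same bound holds for $H$. Finally, $H$ takes values in $(\FF^{\le h}[w_1,\ldots,w_\ell])^n$ by the definition of $\varphi$ and the choice of degree-$\le h$ representatives $g_i'$, since a linear combination of the $g_i'$ with $\FF$-coefficients still has degree at most $h$. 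This establishes that $H$ is a PHSG with density $1-\delta$.

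There is no real obstacle here. The only point needing care is the commutation $\pi(f(\mathbf{g})) = f(\pi(\mathbf{g}))$, which uses that $\pi$ is a ring homomorphism fixing $\FF$. One should also interpret the definition's ``$f(T(y))$'' as $f(H(y))$.
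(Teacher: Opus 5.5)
Your proof is correct and follows the paper's argument. Both use that $\pi$ is a ring homomorphism with $\pi\circ\varphi=\mathrm{id}_{\mathbb{E}}$, so $\pi(f(H(y)))=f(\widehat{H}(y))$, and hence $f(\widehat{H}(y))\neq 0$ forces $f(H(y))\neq 0$; you phrase this as the contrapositive inclusion of the zero events. You are also slightly more careful than the paper's write-up, which says $f\in\mathbb{E}[x_1,\ldots,x_n]$: you take $f\in\FF[x_1,\ldots,x_n]$, as the PHSG definition requires, and check that it stays nonzero of degree at most $d$ when viewed over $\mathbb{E}$.
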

\begin{proof}
    Let $f \in \mathbb{E}[x_1,\ldots,x_n]$.
    Let $s \in S$ such that $f(\widehat{H}(s)) \neq 0$. Since $\pi$ is a ring homomorphism, we obtain
    \[
        0 \neq f(\widehat{H}(s)) = f(\pi(H(s))) = \pi(f(H(s))),
    \]
    and in particular $f(H(s)) \neq 0$.
\end{proof}

\begin{remark}
    We note that $\mathbb{E}$ can equivalently be represented using a single irreducible polynomial $p \in \FF[w]$ of degree $k$, by setting $\mathbb{E} = \FF[w]/(p(w))$. Alternatively, we will use several irreducible polynomials of degree 2 in order to optimize the requirement on the field
    size \(q\). As we will see in \cref{sec:our PRG construction}, the field-size requirement depends on the degrees of the
    polynomial representations of elements of \(\mathbb{E}\). Constructing \(\mathbb{E}\) via
    a single degree-\(k\) irreducible polynomial would lead to a requirement of
    \(q \ge d^8/\varepsilon^2\), whereas our construction yields the improved bound
    \(q \ge (d\log d)^4/\varepsilon^2\).
\end{remark}
\subsection{Sampling Irreducible Polynomials}
Toward constructing PHSGs, we require an efficient method for building the extension
field \(\mathbb{E}\); in particular, we must construct irreducible polynomials over
\(\mathbb{F}_q\). At present, no unconditional deterministic algorithm is known for
constructing an irreducible polynomial of degree \(d\) over a field of size \(q\) in time
\(\mathrm{poly}(\log q, d)\).
Relevant progress includes the work of Shoup~\cite{Shoup1990}, who gave an algorithm
running in time \(O((\sqrt{p} + \log^2 q)\, d^{4})\), where \(p = \mathrm{char}(\mathbb{F}_q)\),
and the work of Adleman and Lenstra~\cite{AdlemanLenstra1986}, who provided a deterministic
\(\mathrm{poly}(\log q, d)\)-time algorithm assuming the Extended Riemann Hypothesis. We overcome this difficulty by using randomness, which we can consider as part of the construction's seed. 

Consider sampling a random degree-$a$ monic polynomial, which requires $O(a \log q)$ random bits. It is well known that with probability $\Theta(1/a)$, it will be irreducible. We wish to amplify this probability to $1-\delta$ for an arbitrary $\delta > 0$. To do it randomness-efficiently, we use randomness samplers.

\begin{definition}[sampler]
    Let $n, m, t$ be some positive integers, and let $
    \varepsilon, \delta > 0$. A function $S \colon\{0,1\}^n \rightarrow (\{0,1\}^m)^t$ is a $(\delta, \varepsilon)$ averaging sampler with $t$ samples using $n$ random bits if for every function $f \colon \{0,1\}^m \rightarrow [0,1]$ we have
    \[
        \Pr_{(Z_1,\ldots,Z_t) \sim S(U_n)}\left[ \left |\frac{1}{t} \sum_i f(Z_i) - \mathbb{E}[f] \right |  \leq \varepsilon \right ] \geq 1 - \delta.
    \]
\end{definition}
We use the recent nearly-optimal sampler of \cite{samplers_construction} (although some earlier constructions, such as \cite{Zuc97} and \cite{RVW00}, would work just as well for our use up to constant factors).
\begin{theorem}[\cite{samplers_construction}]\label{thm:sampler construction}
    For every constant $\beta > 0$, and for every $0 < \delta \leq \varepsilon < 1$ there exists an averaging sampler for the domain $\{0,1\}^m$ that uses $r = m + O\left (\log(1/\delta) \right )$ random bits and $t = O\left(\frac{1}{\varepsilon^2} \log \frac{1}{\delta} \right )^{1 + \beta}$ random samples. This sampler can be constructed in time $\poly(t, r) = \textnormal{poly}(m, \log(1/\delta), 1/\varepsilon)$.
\end{theorem}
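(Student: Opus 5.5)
The plan is to derive the sampler from a seeded randomness extractor, via Zuckerman's equivalence between averaging samplers and extractors. For the extractor I would rely on a near-optimal construction in the small-entropy-deficiency regime, which is exactly what \cite{samplers_construction} supplies and which I do not expect to re-derive from scratch. Concretely, let $r = m + c\log(1/\delta)$ for a constant $c$ to be fixed. Suppose $\mathrm{Ext}\colon \{0,1\}^r \times \{0,1\}^{D} \rightarrow \{0,1\}^m$ is a strong-in-output (ordinary) extractor for min-entropy $k = r - \log(1/\delta) - 1$ with error $\varepsilon$. Define $S(x) = (\mathrm{Ext}(x,y))_{y \in \{0,1\}^D}$, so the number of samples is $t = 2^D$.

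The first step is the standard reduction. Fix $f\colon\{0,1\}^m\rightarrow[0,1]$ and let $B^{+}$ be the set of $x$ with $\frac1t\sum_y f(\mathrm{Ext}(x,y)) > \mathbb{E}[f]+\varepsilon$. If $|B^{+}| \ge 2^k$, then $X=\mathbf{U}_{B^{+}}$ is a $k$-source. For this source, $\mathbb{E}[f(\mathrm{Ext}(X,\mathbf{U}_D))] > \mathbb{E}[f]+\varepsilon$. Since $f$ takes values in $[0,1]$, this means $\mathrm{Ext}(X,\mathbf{U}_D)$ is not $\varepsilon$-close to $\mathbf{U}_m$, a contradiction. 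Hence $|B^{+}| < 2^k$, and symmetrically for $B^{-}$. The failure probability is therefore at most $2\cdot 2^{k-r} \le \delta$. This step is routine.

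The second step is to exhibit $\mathrm{Ext}$ with the required parameters. The source length is $r = m+O(\log(1/\delta))$, the entropy deficiency is $\Delta = r-k = \log(1/\delta)+1$, and the output length $m = k - O(\log(1/\delta))$ allows entropy loss $O(\log(1/\delta))$. The seed length must be $D = (1+\beta)\bigl(2\log(1/\varepsilon) + \log\log(1/\delta)\bigr) + O(1)$, so that $t = O(\varepsilon^{-2}\log(1/\delta))^{1+\beta}$. Since the deficiency is only $O(\log(1/\delta))$, the natural route is the following. First, use a short seed to condense the source to a block of length $O(\log(1/\delta)+\log(1/\varepsilon))$ with constant-rate entropy. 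This could be done via random walks on an expander of degree $\mathrm{poly}(1/\varepsilon)$, or via a lossless condenser such as Guruswami--Umans--Vadhan. Then apply a near-optimal extractor, such as a Reed--Solomon/Trevisan-type or hash-based one, whose seed is $(1+\beta)$ times optimal. Finally, recycle the seed and concatenate with the remaining (nearly uniform) bits, e.g.\ by block-source composition, to reach output length $m$ with entropy loss $O(\log(1/\delta))$.

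The main obstacle is this second step. One must keep the seed at $(1+\beta)$ times the optimal $2\log(1/\varepsilon)+\log\Delta$ while losing only $O(\log(1/\delta))$ entropy and tolerating arbitrary $m$. The constant-factor-free seed is the delicate point. Here I would first try an expander-walk sampler, which handles the $\log(1/\delta)$ dependence, composed with a pairwise-independent hash sampler for the $\varepsilon^{-2}$ dependence; I expect a careful analysis of this pair to give the $(1+\beta)$ exponent. Explicitness in time $\mathrm{poly}(t,r)$ then follows because each component is computable in polynomial time in its seed and input lengths.
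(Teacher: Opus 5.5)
This statement is not proved in the paper; it is imported verbatim from \cite{samplers_construction}. Your proposal has a genuine gap: your first step is correct, but all the content lies in your second step, which you do not supply.

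The first step, Zuckerman's extractor--sampler equivalence, is fine, including the count $2\cdot 2^{k-r}\le\delta$. But it only changes the language. An extractor with source length $m+O(\log(1/\delta))$, deficiency $\Delta=\log(1/\delta)+1$, output length $m$ and $2^{D}=O(\varepsilon^{-2}\log(1/\delta))^{1+\beta}$ \emph{is} the sampler. Deferring that extractor to \cite{samplers_construction} is circular, because extractors with near-optimal seed in this regime are obtained there from the sampler via the converse direction of the same equivalence.

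The sketch you offer instead does not reach the parameters.
\begin{enumerate}
\item[(i)] Since $t$ may not depend on $m$, nothing applied to the whole $r$-bit source may have a seed growing with $\log r$. This rules out Guruswami--Umans--Vadhan condensing of the full source. It is also unnecessary: any block of length $L$ already has min-entropy at least $L-\Delta$.
\item[(ii)] The remaining bits are not nearly uniform. They can have deficiency $\Delta$ (e.g.\ $\Delta$ fixed coordinates), so concatenating them leaves the output $(1-2^{-\Delta})$-far from uniform. They must be re-randomized, e.g.\ by an expander-neighbour extractor whose seed of $O(\Delta+\log(1/\varepsilon))$ bits is itself extracted from the short block.
\item[(iii)] That inner extraction needs output $\Omega(L)$ from a constant-rate block of length $L=O(\log(1/\delta))$. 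The Trevisan-, Reed--Solomon- or hash-based extractors you name (or GUV's) give seed $O(\log L+\log(1/\varepsilon))$ at best: Trevisan's is quadratic, and hashing needs seed at least the output length. This yields $t=\mathrm{poly}(1/\varepsilon,\log(1/\delta))$ with an uncontrolled exponent, not $1+\beta$.
\item[(iv)] No finer analysis of expander walks plus pairwise independence will help. Two pairwise-independent samples in $\{0,1\}^m$ are jointly uniform on $\{0,1\}^{2m}$, so this costs at least $2m$ bits. A walk directly on $\{0,1\}^m$ costs $m+O(\varepsilon^{-2}\log(1/\delta))$ bits. The median-of-averages combination is not an averaging sampler.
\end{enumerate}

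What is missing is a way to bring the exponent down to $1+\beta$ while paying only $O(\log(1/\delta))$ further bits. Here is one route that works.
\begin{enumerate}
\item[(1)] Build a first sampler $S_1$ from the corrected block construction. Run GUV's extractor on a short block of length $L=O(\Delta+\log(1/\varepsilon))$, which is $O(\log(1/\delta))$ since $\delta\le\varepsilon$, to produce the seed of an expander-neighbour extractor on the remaining bits. Via your Step~1, this gives an averaging sampler $S_1$ with $m+O(\log(1/\delta))$ random bits and $t_1=\mathrm{poly}(1/\varepsilon,\log(1/\delta))$ samples.
\item[(2)] Subsample the $t_1$ outputs of $S_1$ with an $\ell$-wise independent sampler $S_2$ on the index set $[t_1]$. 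For $t=O(\varepsilon^{-2}\log(1/\delta))^{1+\beta}$ samples, the $\ell$-wise independent tail bound gives failure probability $O\bigl((O(\ell)/(\varepsilon^{2}t))^{\ell/2}\bigr)$. This is at most $\delta$ already for $\ell=\Theta\bigl(\log(1/\delta)/(\beta\log(\varepsilon^{-2}\log(1/\delta)))\bigr)$.
\item[(3)] Count the cost. $S_2$ uses $\ell\log\max(t,t_1)=O_\beta(\log(1/\delta))$ bits, because $\log t_1=O(\log(\varepsilon^{-2}\log(1/\delta)))$ and $\delta\le\varepsilon$.
\item[(4)] Compose. Fix the seed of $S_1$ and apply $S_2$ to $i\mapsto f(S_1(\cdot)_i)$. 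The accuracies and failure probabilities of the two stages then simply add.
\end{enumerate}
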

After the sampling step, we will have to test if the sampled polynomials are irreducible. For this, we will use the following well known algorithm.
\begin{algorithm}\label{thm:test_irreducibility_algorithm}
    Let $\mathbb{F}_q$ be a finite field. There exists a deterministic algorithm that on input a polynomial 
    $f \in \mathbb{F}_q[w]$ of degree $a$, decides whether $f$ is irreducible in time $\poly(\log q, a)$.
\end{algorithm}
\begin{proof}[Proof (sketch)]
The algorithm is based on the classical characterization of irreducible polynomials
over finite fields. For each integer $i = 1, \ldots, a-1$, compute
\[
\gcd\!\bigl(f(x), x^{q^i} - x\bigr).
\]
Since every irreducible polynomial of degree $i$ over $\FF_q$ divides $x^{q^i} - x$,
the polynomial $f$ is irreducible over $\mathbb{F}_q$ if and only if all these greatest
common divisors are equal to $1$.

Each computation of $x^{q^i} \bmod f$ and the corresponding $\gcd$ can be carried out
in time polynomial in $\log q$ and $a$ (\cite{GathenGerhard2013}), and since the number of iterations is $a-1$,
the overall running time is $\poly(\log q, a)$.
\end{proof}
We are ready to construct $\mathbb{E}$ efficiently.
\begin{theorem}\label{thm:algorithm_constructing_E}
    Let $\FF = \FF_q$, let $k = 2^\ell$ be some positive power of $2$ and let $\delta > 0$. There exists a probabilistic algorithm that uses $O(k \log q + \log (1/\delta))$ random bits and runs in time $\poly(k \log q , \log (1/\delta))$, such that with probability at least $1-\delta$ outputs $\ell$ polynomials $p_i \in \FF[w_1,\ldots,w_\ell]$, such that $P = (p_1,\ldots, p_\ell) \lhd \FF[w_1,\ldots,w_\ell]$ is maximal and $\mathbb{E} = \FF[w_1,\ldots,w_\ell] / P$ is a field extension of degree $k$ over $\FF$. Moreover, for all $1 \leq i \leq \ell$ we have $p_i = w_i^2 - h_i$, where $h_i\in\FF_q^{\leq i-1}[w_1,\ldots,w_{i-1}]$. Otherwise, the algorithm declares failure.
\end{theorem}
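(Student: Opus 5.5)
We build the tower $\FF = \mathbb{E}_0 \subset \mathbb{E}_1 \subset \cdots \subset \mathbb{E}_\ell = \mathbb{E}$ of quadratic extensions one level at a time, where $\mathbb{E}_i = \FF[w_1,\ldots,w_i]/(p_1,\ldots,p_i)$ has degree $2^i$ over $\FF$. At level $i$ the elements of $\mathbb{E}_{i-1}$ are represented by reduced polynomials, namely $\FF$-combinations of the monomials $\prod_{j\in S} w_j$ with $S\subseteq[i-1]$. These have total degree at most $i-1$, because every $w_j^2$ can be rewritten as $h_j$.

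To extend from level $i-1$ we sample $h_i \in \mathbb{E}_{i-1}$ in this reduced form and set $p_i = w_i^2 - h_i$. Then $\mathbb{E}_{i-1}[w_i]/(w_i^2-h_i)$ is a field of degree $2$ over $\mathbb{E}_{i-1}$ if and only if $w_i^2-h_i$ is irreducible over $\mathbb{E}_{i-1}$.

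In odd characteristic this holds exactly when $h_i$ is a non-square, which happens for a $(|\mathbb{E}_{i-1}|-1)/(2|\mathbb{E}_{i-1}|) \ge 1/3$ fraction of elements. Characteristic $2$ needs different handling: there $w^2-h$ is never irreducible over a finite field, so the statement's form $p_i = w_i^2 - h_i$ cannot work as written. In that case I would either assume odd characteristic, as in the paper's application where $\operatorname{char} > d(d-1)$, or use $w_i^2 + w_i - h_i$ (Artin–Schreier), which is irreducible for half of all $h_i$. I will treat the odd case as the main one.

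Since the success probability at each level is a constant, we amplify it with the sampler of \cref{thm:sampler construction}. At level $i$ the domain is $\{0,1\}^{m_i}$ with $m_i = 2^{i-1}\log q$, and we take $\varepsilon = 1/6$ and failure parameter $\delta/\ell$. The sampler uses $m_i + O(\log(\ell/\delta))$ bits and produces $t = O(\log(\ell/\delta))^{1+\beta}$ candidate values $h_i$. With probability at least $1-\delta/\ell$, at least a $1/6$ fraction of these candidates are non-squares, so at least one good candidate exists.

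For each candidate we test whether it is a non-square. We compute $h_i^{(|\mathbb{E}_{i-1}|-1)/2}$ in $\mathbb{E}_{i-1}$ by repeated squaring, where multiplication means polynomial multiplication followed by reduction $w_j^2 \mapsto h_j$ from $j=i-1$ down to $1$. This costs $\poly(2^{i},\log q)$ operations, and the candidate is a non-square exactly when the result is $-1$. This test plays the role of \cref{thm:test_irreducibility_algorithm}. We keep the first candidate that passes, and if none passes we declare failure. A union bound over the $\ell$ levels bounds the total failure probability by $\delta$.

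The total randomness is $\sum_i m_i + O(\ell \log(\ell/\delta)) = O(k\log q) + O(\ell\log \ell + \ell\log(1/\delta))$. The extra factor $\ell$ on $\log(1/\delta)$ is a problem, since we need $O(\log(1/\delta))$. The fix is to run all levels off one sampler: each level uses a fresh $m_i$-bit block supplied by a single sampler instance. Alternatively, we can note that $\ell\log(1/\delta)$ is acceptable whenever it is $O(k\log q + \log(1/\delta))$. In the paper's regime $\ell = \log k$, and simply setting the per-level error to $\delta/\ell$ may already be absorbed, since $\ell\log(\ell/\delta) \le O(k\log q)$ whenever $\log(1/\delta) \lesssim k\log q/\log k$, which is the regime where $\delta \approx d/q^k$.

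Maximality of $P$ follows by induction. The quotient $\FF[w_1,\ldots,w_\ell]/P$ is obtained by successively adjoining roots of irreducible quadratics, so it is a field of degree $2^\ell = k$, and therefore $P$ is maximal. The running time is polynomial, since $t$, the field arithmetic and the number of levels are all polynomial in $k\log q$ and $\log(1/\delta)$.

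I expect the hardest step to be the randomness accounting: keeping the total overhead at $O(\log(1/\delta))$ rather than $O(\ell\log(1/\delta))$ requires either sharing one sampler across all levels or checking the parameter regime, as described above.
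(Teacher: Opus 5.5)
Your tower of quadratic extensions, the Euler-criterion test and the maximality argument are all fine. You are also right that $w^2-h$ is never irreducible over a finite field of characteristic $2$; the paper sets that case aside in a footnote too. The gap is the one you flagged yourself, and neither of your repairs closes it. With an independent sampler per level you spend $O(k\log q+\ell\log(\ell/\delta))$ random bits, which is not $O(k\log q+\log(1/\delta))$ for arbitrary $\delta$.

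Your second repair rests on a false claim. In the regime $\delta\approx d/q^k$, which is where \cref{thm:HSG for polys construction} invokes this theorem, we have $\log(1/\delta)\approx k\log q-\log d$, not $O(k\log q/\log k)$. So $\ell\log(1/\delta)$ is of order $k\log k\log q$, a $\log k$ factor over budget.

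Your first repair (``a fresh $m_i$-bit block supplied by a single sampler instance'') is left unspecified. If you keep the level-by-level selection with $\varepsilon=1/6$, it is not justified. The test at level $i$ (non-squareness in $\mathbb{E}_{i-1}$) depends on which candidates were kept at earlier levels, and those are now functions of the same sampler output. The sampler only guarantees good behaviour for a test function fixed in advance.

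The missing idea, which is what the paper does, is to drop level-by-level selection altogether:
\begin{enumerate}
\item Sample the whole tuple $(h_1,\ldots,h_\ell)$ from the product domain of $m=O(k\log q)$ bits.
\item Apply one sampler to the fixed joint indicator that every $w_i^2-h_i$ is irreducible over its own $\mathbb{E}_{i-1}$. Its mean is $2^{-\ell}=1/k$ for uniform nonzero $h_i$, and at least $3^{-\ell}$ with your sampling.
\item Take $\varepsilon=1/(2k)$, or half of $3^{-\ell}$ with your sampling.
\end{enumerate}
This costs nothing in randomness, because the seed length $m+O(\log(1/\delta))$ in \cref{thm:sampler construction} does not depend on $\varepsilon$. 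Shrinking $\varepsilon$ only raises the number of samples to $t=O(k^2\log(1/\delta))^{1+\beta}$, which is still polynomial. Each sampled tuple is then tested along its own tower, so there is no adaptivity.

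Alternatively, your adaptive variant can be saved by a union bound over the at most $\ell q^{k}$ possible prefixes $(h_1,\ldots,h_{i-1})$. Concretely, run the single sampler with error $\delta/(\ell q^{k})$, which adds only $O(k\log q)$ bits. Either way, some argument of this kind is needed and is absent from your proposal.
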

\begin{proof}
    We begin by setting the following notation.
    \begin{itemize}
        \item $R_i = \{f \in \FF_q[w_1,\ldots,w_i]: \deg_{w_j} f \leq 1 \textnormal{ for all } 1 \leq j \leq i \}$. Notice that $|R_i| \le q^{2^i}$.
        \item $D = \prod_{1 \leq i \leq \ell} (R_i \setminus \{0\})$.
        \item $m = \log |D| = O(k \log q)$.
        \item $\varepsilon = \frac{1}{2k}$.
    \end{itemize}
    For $\mathbf{a} = (\alpha_1,\ldots, \alpha_\ell) \in D$, let $\mathbf{p}(\mathbf{a}) = (p_1,\ldots,p_\ell) = (w_1^2 - \alpha_1,\ldots, w_\ell^2 - \alpha_\ell)$.
    Let $\FF_0 = \FF$. For $0 \leq i < \ell$, let $$\FF_{i+1} = \FF_i[w_{i+1}] / p_{i+1}(w_{i+1}).$$
    Let $\mathbb{E} = \FF_\ell$, and let $P = (p_1, \ldots, p_\ell) \lhd \FF[w_1, \ldots , w_\ell]$. Notice that $\FF_\ell$ is a field if and only if $P \lhd \FF_q[w_1,\ldots,w_\ell]$ is maximal, which happens if and only if $p_i \in \FF_{i-1}[w_i]$ is irreducible for all $1 \leq i \leq \ell$.

We will use a sampler to find $\mathbf{a} \in D$ such that $\mathbb{E}$ is a field.
Let $S \colon \{0,1\}^r \rightarrow D^t$ be the $(\delta,\eps)$ sampler given by \cref{thm:sampler construction} with $\beta = 1$, $r = m + O(\log \frac{1}{\delta})$ and $t = O\left(\frac{1}{\varepsilon^2} \log (1 / \delta) \right )^{2}$. Let $f \colon D \rightarrow \{0,1\}$ be the indicator function such that $f(\mathbf{a})=1$ if and only if $\mathbf{p}(\mathbf{a})$ is maximal. It is well knownt that for every finite field of characteristic different than $2$, the probability that an invertible element is a square equals $1/2$
\footnote{In characteristic $2$, one may replace the sampling of polynomials of the form $w_i^2 - \alpha_i$ by sampling arbitrary monic quadratic polynomials $w_i^2 + \alpha_i w_i + \beta_i$. By the prime polynomial theorem, such a polynomial is irreducible with probability at least $1/2 - 1/q$, and an analogous analysis applies.
We omit this case, as our results are already taking place only for large characteristic.}.
Hence, if $(\alpha_1,\ldots, \alpha_\ell) \in D$ is chosen uniformly, the probability that all polynomials $p_i \in \FF_{i-1}[w_i]$ are irreducible is $1/2^\ell = 1/k$. Hence, $\EE[f] = 1/k$.

Using the sampler $S$ we pick $t = O(k^2 \log(1/\delta))^{2}$ values $(\mathbf{a}_1,\ldots,\mathbf{a}_t) \in D^t$ using $r=m+O(\log(1/\delta))$ random bits in time $\poly(k \log (q), \log (1/\delta))$, such that
\[
    \Pr_{(Z_1,\ldots,Z_t) \sim S(U_r)}\left[ \left |\frac{1}{t} \sum_i f(Z_i) - \mathbb{E}[f] \right |  \leq \varepsilon \right] \geq 1 - \delta.
\]
Since $\mathbb{E}[f] = \frac{1}{k}$ and $\varepsilon = \frac{1}{2k}$, this implies in particular that with probability at least $1-\delta$ over $(Z_1,\ldots,Z_t) \sim S(U_r)$, $f(Z_i)=1$ for at least one $i \in [t]$.

On input a sampler seed $z \in \{0,1\}^r$, we compute $S(z) = (z_1,\ldots,z_t)$, and for each $z_i = (\alpha_1,\ldots,\alpha_\ell) \in D$, we test if $f(z_i)=1$, as follows:
\begin{itemize}
    \item For all $i$ from $1$ to $\ell$, do:
    \begin{itemize}
        \item Run \cref{thm:test_irreducibility_algorithm} to check if $p(w_i) \in \FF_{i-1}[w_i]$ is irreducible. If not, return false.
    \end{itemize}
    \item Return true.
\end{itemize}
If we have found that indeed $f(z_i) = 1$ for some $i \in [t]$, return $(p_1,\ldots, p_\ell)$. Otherwise, the algorithm declares failure. 

Recall that the runtime of each irreducibility test is at most $\poly\left(\log |\FF_{i-1}|\right) = \poly(k\,\log q)$. Thus, the overall runtime of our algorithm is $\poly(k,\log q,\log(1/\delta))$
Finally, recall that if $f(\mathbf{a}) = 1$ then each $p_i \in \FF_{i-1}[w_i]$ is an irreducible polynomial of degree $2$, and hence $[\FF_{i} : \FF_{i-1}] = 2$. Therefore, $$[\mathbb{E}:\FF] = \prod_{1 \leq i \leq \ell} [\FF_{i}:\FF_{i-1}] = 2^\ell = k.$$ This completes the proof.
\end{proof}

\subsection{The PHSG Construction}
In this subsection we prove the following theorem.
\begin{theorem}\label{thm:HSG for polys construction}
    Let $\FF=\FF_q$ be a finite field, $d$ a positive integer and $\delta > 0$. Let $k = 2^\ell$ be a power of $2$. Then, there exists an absolute constant $c$ such that if $\delta \geq c \cdot \frac{d}{q^k}$, there exists a PHSG $H \colon T \rightarrow (\FF^{\leq \ell}[w_1,\ldots,w_\ell])^n$ with density $1 - \delta$ for polynomials of degree at most $d$ over $\FF$, which can be constructed in time $\poly(d, n, k,\log q)$, and has seed length $O(d \log n + k \log q)$.
\end{theorem}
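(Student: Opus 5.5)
The plan is to combine three ingredients stated earlier: \cref{thm:algorithm_constructing_E}, which builds the extension $\mathbb{E}$ of degree $k$; \cref{thm:HSG construction}, the Guruswami--Xing HSG applied over $\mathbb{E} \cong \FF_{q^k}$; and \cref{claim:HSG for univariate polynomials}, which converts an HSG over $\mathbb{E}$ into a PHSG. The seed is split into two parts. The first part, of length $r_1 = O(k\log q + \log(1/\delta'))$, is fed to the algorithm of \cref{thm:algorithm_constructing_E} with failure probability $\delta' = \delta/2$. The second part is a seed for the Guruswami--Xing HSG over $\mathbb{E}$, with density $1-\delta/2$, for $n$-variate polynomials of degree at most $d$.

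\textbf{Choice of representatives.} When the first stage succeeds, $\mathbb{E} = \FF[w_1,\ldots,w_\ell]/(w_1^2-h_1,\ldots,w_\ell^2-h_\ell)$ with $\deg h_i \le i-1$. A natural $\FF$-basis consists of the multilinear monomials $\prod_{i\in S} w_i$ for $S\subseteq[\ell]$, and there are $2^\ell = k$ of them. Every class mod $P$ contains a multilinear representative: repeatedly replace $w_i^2$ by $h_i$, which terminates by induction on the largest index appearing squared. Each such representative has total degree at most $\ell$. So I take $g_i'$ to be these monomials, define $\varphi$ as in the preamble with $h=\ell$, and set $H = \varphi\circ \widehat H$. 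Then $\varphi$ is computable by writing an element of $\mathbb{E}$ in the monomial basis, which is how elements are represented anyway.

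\textbf{Handling failure.} If the first stage fails, $H$ outputs the zero vector. The union bound then gives $\Pr[f(H(y))=0] \le \delta/2 + \delta/2 = \delta$. On success, a nonzero $f\in\FF[\mathbf{x}]$ of degree at most $d$ is also a nonzero polynomial over $\mathbb{E}$, so by \cref{claim:HSG for univariate polynomials} it vanishes on $H$ with probability at most $\delta/2$.

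\textbf{Parameters.} Guruswami--Xing needs $|\mathbb{E}| = q^k \ge c' d/(\delta/2)$, which is exactly the hypothesis $\delta \ge c\, d/q^k$ with $c = 2c'$. Its seed length is $O(d\log n + \log(1/\delta))$, and the total seed length is $O(d\log n + k\log q + \log(1/\delta))$. Since we may assume $\delta \ge c\,d/q^k$, we can also take $\delta$ no smaller than this bound: decreasing $\delta$ only strengthens the guarantee. This gives $\log(1/\delta) \le k\log q$, so the seed length is $O(d\log n + k\log q)$. The running time is polynomial because arithmetic in $\mathbb{E}$ costs $\poly(k,\log q)$ and both subroutines are efficient.

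\textbf{Expected obstacle.} The subtle point is that the PHSG is defined for a fixed target set of polynomials, while the field $\mathbb{E}$ itself depends on the seed. So $\varphi$ and the basis vary with $y$. This is harmless: the conclusion of \cref{claim:HSG for univariate polynomials} holds for each successful choice of $P$ separately, and the codomain $\FF^{\le \ell}[w_1,\ldots,w_\ell]$ is independent of $P$. Therefore we condition on the first stage and average.

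A second issue is that the Guruswami--Xing HSG must be run over an explicit representation of $\mathbb{E}$. I will assume it works given any field representation with efficient arithmetic, and I will check that its construction uses only field operations rather than a specific model of $\FF_{q^k}$.
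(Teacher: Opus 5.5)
Your proposal is correct and matches the paper's proof essentially step for step. The paper also samples $\mathbb{E}$ via \cref{thm:algorithm_constructing_E} with failure probability $\delta/2$, runs the Guruswami--Xing HSG over $\mathbb{E}$ with density $1-\delta/2$, and pushes it through $\varphi$ using the multilinear representatives of total degree at most $\ell$ via \cref{claim:HSG for univariate polynomials}, outputting $0$ on failure and finishing with a union bound. One small remark: the bound $\log(1/\delta)\le k\log q$ follows directly from the hypothesis $\delta\ge c\,d/q^k$, so no adjustment of $\delta$ is needed.
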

\begin{proof}
    We start by picking $p_1,\ldots,p_\ell \in \FF[w_1,\ldots,w_\ell]$ using the algorithm from \cref{thm:algorithm_constructing_E} with $\delta' = \delta/2$. This algorithm uses $O(k \log q + \log (1/\delta)) = O(k\log q)$ random bits; i.e.\ there exists an efficiently computable map $$A \colon T_1 \rightarrow (\FF_q^{\leq \ell}[w_1,\ldots,w_\ell])^\ell$$ such that 
    with probability at least $1 - \delta'$ over $t \in T_1$ we have that $$A(t) = (p_1,\ldots,p_\ell) \in \FF[w_1,\ldots,w_\ell]$$ satisfies that $P = (p_1,\ldots,p_\ell) \lhd \FF[w_1\ldots,w_\ell]$ is a maximal ideal, and $p_i = w_i^2 - h_i$ where $h_i\in\FF_q^{\leq 1}[w_1,\ldots,w_{i-1}]$, and $\mathbb{E} = \FF[w_1\ldots,w_\ell] / P$ is a finite field with $q^k$ elements. Moreover, $\log|T_1| = O(k \log q)$. 

    If the algorithm from \cref{thm:algorithm_constructing_E} did not declare failure,
    we can construct the field 
    \[\mathbb{E} = \FF[w_1,\ldots,w_\ell] / (p_1,\ldots,p_\ell).\]
    Let $\widehat{H} \colon T_2 \rightarrow\mathbb{E}^n$ be the \cite{GXHittingSets} HSG given by \cref{thm:HSG construction} for $n$-variate polynomials of degree at most $d$ over $\mathbb{E}$, set with $\delta' = \delta/2$. Recall that $\widehat{H}$ has seed length $O(d \log n + \log (1 / \delta)) = O(d \log n + k \log q)$.
    Let $$\varphi \colon \mathbb{E} \rightarrow \FF^{\leq \ell}[w_1,\ldots,w_\ell]$$ be the $\FF$-linear map such that $\deg (\varphi(\alpha))_{w_i} \leq 1$ for all $i \in [\ell]$.
    By \cref{claim:HSG for univariate polynomials}, $$H' = \varphi \circ \widehat{H} \colon T_2 \rightarrow (\FF_q^{\leq \ell}[w_1,\ldots,w_\ell])^n$$
    is a PHSG for $n$-variate polynomials of degree at most $d$ with density $1 - \delta'$. 
    
    Let \[H \colon T_1 \times T_2 \rightarrow (\FF_q[w_1,\ldots,w_\ell])^n\] be the map such that for $(t_1,t_2)\in T_1 \times T_2$, if $A(t_1)$ succeeds then $H(t_1,\cdot) = H'$, and otherwise, for concreteness, we set $H(t_1, \cdot) = 0$. Now, fix some
    nonzero $f \in \FF[x_1,\ldots,x_n]$ of degree at most $d$.
    With probability at least $1 - \delta'$ over $t_1 \in T_1$, $A(t_1)$ succeeds, and conditioned on that, with probability at least $1-\delta'$ over $t_2 \in T_2$, $$f(H(t_1,t_2)) = f(H'(t_2)) \neq 0.$$
    Therefore,
    \[
        \Pr_{(t_1,t_2) \in T_1 \times T_2}[f(H(t_1,t_2)) \neq 0] \geq 1 - 2 \delta' = 1 - \delta,
    \]
    which completes the proof.
\end{proof}

%% file: construction.tex
\section{Our PRG Construction}\label{sec:our PRG construction}
Let $n,d$ be positive integers, let $q = p^u$ be a prime power with $p \geq d(d-1) + 1$ and $q \geq C((d \log d)^4 / \varepsilon^2)$, for some universal constant $C$ to be determined later on, and let $\varepsilon > 0$. We now present the construction of our $\eps$-error PRG
\[
    G \colon S \longrightarrow \FF_q^{n+1}
\]
for polynomials $f \in \FF_q[\mathbf{x}, y] = \FF_q[x_1\ldots,x_n,y]$ of degree at most $d$. Let $c$ be an absolute constant, larger than the constants appear in \cref{thm:HSG construction}, \cref{thm:HSG for polys construction} and \cref{thm:indecomposable implies equidistribution}.
We will need one PHSG and one HSG for the construction.
\begin{enumerate}
    \item Let $k = 2^\ell$ be a power of $2$ such that $\left \lceil \frac{d}{\log q} \right \rceil + 1 < k \leq 2 \left \lceil \frac{d}{\log q} \right \rceil + 2$. Let \[H_1\colon T_1\longrightarrow (\FF^{\leq \ell}_{q}[w_1,\ldots,w_\ell])^n\] 
    be the PHSG given by \cref{thm:HSG for polys construction} for $n$-variate polynomials over $\FF_q$ of degree at most $2d-1$, with $\delta_1 = c \cdot d/q^k \leq  c \cdot \frac{d}{2^dq}$, and seed length $O(d \log n + \log q)$. 
    \item Let \[H_2 \colon T_2\longrightarrow \FF_{q}^n\] be the HSG given by \cref{thm:HSG construction} for $n$-variate polynomials over $\FF_q$ of degree at most $d$, with $\delta_2 = c \cdot d / q$ and seed length $O(d \log n + \log q)$.
\end{enumerate}

We are now ready to define $G$.
Let $S = T_1 \times T_2 \times \FF_q^\ell \times \FF_q \times \FF_q$. Define $G \colon S \rightarrow \FF_q^{n+1}$ by
\begin{equation}\label{def:the PRG G}
    G(r, s, \mathbf{t}, u, v) = (H_1(r)_1(\mathbf{t}) \cdot v + H_2(s)_1 \cdot u, \ldots , H_1(r)_n(\mathbf{t}) \cdot v + H_2(s)_n \cdot u, u),
\end{equation}
where $\mathbf{t} = (t_1,\ldots,t_\ell)$.

Note that the running time of $G$ on input $s \in S$ is $\poly(n,d,\log q)$.
Indeed, by \cref{thm:HSG construction,thm:HSG for polys construction}, the HSG and PHSG can be computed within this time bound, and the additional step of evaluating $\ell = O(\log d)$ variables in $n$ polynomials of degree at most $\ell$ requires only $\poly(n,\log d,\log q)$ time.

We proceed by showing that if $f$ is indecomposable, then the random restricted polynomial $F = f \circ \mathbf{p}$ remains indecomposable with high probability.
\begin{proposition}
    \label{lem:reduction}
    Let $f\in \FF_q[\mathbf{x},y]$ be an indecomposable $(n+1)$-variate polynomial of degree at most $d$ over $\FF_q$. 
    Let $(r,s)$ be a random element of $T_1\times T_2$.
    Let $H_2(s)=(a_1,\dots,a_n) = \mathbf{a}$, 
    let $H_{1}(r) = (b_1(\mathbf{w}),\dots,b_n(\mathbf{w})) = \mathbf{b}(\mathbf{w})$
    for $\mathbf{w} = (w_1,\ldots,w_\ell)$, and finally, denote
\[F=f(b_1(\mathbf{w}) x+a_1y,\dots,b_n(\mathbf{w}) x+a_n y, y)\in\FF_q[x,y, \mathbf{w}].\] Then,
    \[
    \Pr[F~\text{is indecomposable over}~\FF_q]\geq 1 - \delta_2 -(2^{d-1} - 1)\delta_1.
    \]
\end{proposition}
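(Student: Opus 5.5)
We follow the two-stage scheme of \cite{dwivedi2024optimalpseudorandomgeneratorslowdegree}: first use $\mathbf{a}$ to put $f-t$ into Hypothesis~(H), then use $\mathbf{b}$ to restrict to two variables while preserving irreducibility over a larger algebraically closed field. We may assume $f$ is non-constant of degree $d'\le d$.

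\emph{Step 1 (the HSG $H_2$).} Since $\operatorname{char}\FF_q>d(d-1)\ge d$, \cref{cor:HSG for Hyptothesis (H)} gives a nonzero $B\in\FF_q[\mathbf{x}]$ of degree at most $d$ such that $B(\mathbf{a})\neq 0$ implies $s_{\mathbf{a}}(f)-t=c\cdot g$, where $c\in\FF_q^\times$ and $g$ has degree $d'$ and satisfies Hypothesis~(H). By the HSG guarantee, this fails with probability at most $\delta_2$.

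Fix such an $\mathbf{a}$ and let $\KK=\overline{\FF_q(t)}$. Since $s_{\mathbf{a}}$ is an automorphism, $s_{\mathbf{a}}(f)$ is indecomposable over $\FF_q$, hence over $\overline{\FF_q}$ by \cref{lemma:indecomposable over algebraic closure}. By \cref{lemma:indecomposable equivalence to irreducibility}, $g$ is then irreducible in $\KK[\mathbf{x},y]$.

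\emph{Step 2 (the PHSG $H_1$).} Let $\mathbb{L}=\overline{\FF_q(t,\mathbf{w})}\supseteq\KK$. Apply \cref{prop:HSG property for irreducibility} to $g$ with $\KK\subseteq\mathbb{L}$; this is valid since $\operatorname{char}>d(d-1)$. It yields $m=2^{d'-1}-1\le 2^{d-1}-1$ nonzero polynomials $Q_i\in\KK[\mathbf{z}]$ of degree at most $2d-1$. If no $Q_i$ vanishes at $\mathbf{b}(\mathbf{w})\in\mathbb{L}^n$, then $\mathbf{b}$ is a Bertinian good point, so $g(b_1x,\dots,b_nx,y)$ is irreducible over $\mathbb{L}$.

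We must bound $\Pr[Q_i(\mathbf{b}(\mathbf{w}))=0]\le\delta_1$ even though $Q_i$ has coefficients in $\KK$, not $\FF_q$. Write $Q_i=\sum_j \gamma_j R_j$ with $R_j\in\FF_q[\mathbf{z}]$ of degree at most $2d-1$ and $\{\gamma_j\}\subseteq\KK$ linearly independent over $\FF_q$, hence over $\FF_q(\mathbf{w})$ ($\KK$ and $\FF_q(\mathbf{w})$ are linearly disjoint over $\FF_q$, as $\mathbf{w}$ are independent transcendentals). Then $Q_i(\mathbf{b}(\mathbf{w}))=0$ forces $R_j(\mathbf{b}(\mathbf{w}))=0$ for every $j$, and in particular for some fixed nonzero $R_j$, which happens with probability at most $\delta_1$ by the PHSG property. (This is the analogue of \cref{lemma:HSG remains in extension}.) A union bound over the $m$ polynomials and Step 1 gives the claimed probability.

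\emph{Step 3 (descending back to $F$).} Note that $F=s_{\mathbf{a}}(f)(b_1x,\dots,b_nx,y)$, so $F-t=c\cdot g(\mathbf{b}x,y)$, and this is irreducible in $\mathbb{L}[x,y]$. The main obstacle is to upgrade this to irreducibility of $F-t$ in $\KK[\mathbf{w},x,y]$. Irreducibility over $\mathbb{L}$ implies irreducibility over $\KK(\mathbf{w})[x,y]$. To lift to $\KK[\mathbf{w}][x,y]$, use \cref{lemma:multivariate Guass}, viewing $F-t$ as a polynomial in $y$ over $\KK[\mathbf{w},x]$. Since $g$ is monic in $y$ with $\deg_y g=\deg g$, the polynomial $F-t$ is, up to the unit $c$, monic in $y$ of positive degree. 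Its content is therefore $1$, so it is irreducible in $\KK[\mathbf{w},x,y]$. Finally, \cref{lemma:indecomposable equivalence to irreducibility} applied to $F\in\FF_q[\mathbf{w},x,y]$ shows $F$ is indecomposable over $\overline{\FF_q}$, and \cref{lemma:indecomposable over algebraic closure} transfers this to $\FF_q$.
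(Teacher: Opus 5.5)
Your proposal is correct and follows the paper's proof essentially step for step. The shared steps are: Hypothesis~(H) via $B(\mathbf{a})\neq 0$ from $H_2$; Lecerf's polynomials $Q_i\in\KK[\mathbf{z}]$ evaluated at $\mathbf{b}(\mathbf{w})\in\mathbb{L}^n$ with $\mathbb{L}=\overline{\FF_q(t,\mathbf{w})}$; and Gauss's lemma, using that the leading $y$-coefficient of $F-t$ lies in $\FF_q^\times$, to descend from $\mathbb{L}$ to $\KK[\mathbf{w},x,y]$ (you compress the paper's intermediate passage from $\KK(\mathbf{w})[x,y]$ to $\KK(\mathbf{w},x)[y]$, but it follows from the same monicity). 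Your linear-disjointness argument, which reduces $Q_i(\mathbf{b}(\mathbf{w}))=0$ to the vanishing of a fixed nonzero $R_j\in\FF_q[\mathbf{z}]$ of degree at most $2d-1$, supplies a justification the paper only asserts: that $H_1$ still has density $1-\delta_1$ against polynomials with coefficients in $\overline{\FF_q(t)}$.
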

\begin{proof}
    Recall that $s_{\mathbf{a}}$ is the $\FF_q$-linear automorphism of $\FF_q[\mathbf{x}, y]$ that fixes $y$ and sends $x_i$ to $x_i+a_i y$.
    As $f$ is indecomposable over $\FF_q$, so is $s_{\mathbf{a}}(f)$.
    By \cref{lemma:indecomposable over algebraic closure}, $s_{\mathbf{a}}(f)$ is also indecomposable over $\overline{\FF}_q$.
    By \cref{lemma:indecomposable equivalence to irreducibility}, we further have that 
    $s_{\mathbf{a}}(f)-t$ is irreducible over $\overline{\FF_q(t)}$.
    
    By \cref{cor:HSG for Hyptothesis (H)}, there exists a nonzero polynomial $B\in \FF_q[\mathbf{x}]$ of degree at most $d$ such that if $B(\mathbf{a})\neq 0$, then 
    \begin{equation}\label{eq:cdotg}
        s_{\mathbf{a}}(f)-t=c\cdot g
    \end{equation}
    where $c\in\FF_q^\times$ and $$g\in\FF_q(t)[\mathbf{x}, y]\subseteq \overline{\FF_q(t)}[\mathbf{x}, y]$$ is a degree-$d$ polynomial satisfying Hypothesis~(H). Since $H_2$ is a HSG, the event $B(\mathbf{a})\neq 0$ happens with probability at least $1-\delta_2$. Condition on this event, so 
    \cref{eq:cdotg} holds.
    As $s_{\mathbf{a}}(f)-t$ is irreducible over $\overline{\FF_q(t)}$, so is $g$.

    Let $\KK = \overline{\FF_q(t)}$  and let $\mathbb{L} = \overline{\FF_q(t, \mathbf{w})}$ be such that $w_1,\ldots,w_{\ell}$ are new variables.
    Let $m=2^{d-1}-1$.
    As $g \in \KK[\mathbf{x},y]$, by \cref{prop:HSG property for irreducibility}, there exist nonzero polynomials $Q_1,\dots,Q_m\in \KK[z_1,\dots,z_n]$ of degree at most $2d-1$ such that the union of the zero loci of these polynomials contains all $\mathbf{b}^*=(b_1^*,\dots,b_n^*)\in \mathbb{L}^n$ for which
    $g(b_1^* x, \dots, b_n^* x, y)$ is reducible over $\mathbb{L}$.
    $H_1$ is a PHSG with density $1-\delta_1$ for polynomials of degree at most $2d-1$ over $\overline{\FF_q(t)}$.
    Therefore, for each $i\in [m]$, the probability that $Q_i(\mathbf{b}(\mathbf{w}))=0$ is at most $\delta_1$.
    
    Condition on the event that $Q_i(\mathbf{b}(\mathbf{w})) \neq 0$ for all $i \in [m]$. Then, $g(b_1(\mathbf{w}) x, \dots, b_n(\mathbf{w}) x,y)$ is \emph{irreducible} over $\mathbb{L}$.
    On the other hand, note that
    \begin{align*}
        c\cdot g(b_1(\mathbf{w}) x, \dots, b_n(\mathbf{w}) x,y)
        &\stackrel{\eqref{eq:cdotg}}{=}(s_{\mathbf{a}}(f))(b_1(\mathbf{w}) x,\dots,b_n(\mathbf{w}) x,y)-t \\
        &\ =f(b_1(\mathbf{w}) x+a_1 y,\dots,b_n(\mathbf{w}) x+a_n y,y)-t=F-t,
    \end{align*}
    where the second step uses the definition
    $s_{\mathbf{a}}(f)=f(x_1+a_1 y,\dots,x_n+a_n y, y)\in\FF_q[\mathbf{x},y]$. 
    
    Thus, $F-t$ is irreducible over $\mathbb{L}$, and hence as an element in $\overline{\FF_q(t)}(\mathbf{w})[x, y]$. By \cref{lemma:Gauss}, it is then irreducible as an element in $\overline{\FF_q(t)}(\mathbf{w}, x)[y]$.
    Note that the coefficient of $y^d$ in $F-t$ is and element in $\FF_q$. Thus, as an element in $\left (\overline{\FF_q(t)}[\mathbf{w},x]\right )[y]$, the content of $F-t$ is $c(F) = 1$. Hence, by \cref{lemma:multivariate Guass}, $F-t$ is irreducible as an element in $$\overline{\FF_q(t)}[\mathbf{w},x][y] = \overline{\FF_q(t)}[x, y, \mathbf{w}].$$
    By \cref{lemma:indecomposable equivalence to irreducibility}, $F$ is indecomposable over $\overline{\FF}_q$. So it is indecomposable over $\FF_q$.

    Overall, note that 
    the indecomposability of $F$ over $\FF_q$ relies on the conditions $B(\mathbf{a})\neq 0$ and $Q_1(\mathbf{b}(\mathbf{w})) \neq 0,\dots,Q_m(\mathbf{b}(\mathbf{w}))\neq 0$. By the union bound, these conditions are simultaneously satisfied with probability at least $1-\delta_2-m\delta_1=1-\delta_2 - (2^{d-1} - 1)\delta_1$, which completes the proof.
\end{proof}

\begin{theorem}\label{thm:main}
There exists an absolute constant $C>0$ such that for all $\eps>0$ and $q\geq C\frac{(d\log d)^4}{\eps^2}$ with $\operatorname{char}(\FF_q) \geq d(d-1)+1$,
$G$ as defined in \cref{def:the PRG G} is a PRG for $(n+1)$-variate polynomials of degree at most $d$ over $\mathbb{F}_q$ with error $\eps$ and seed length $O(d\log n+\log q)$.
\end{theorem}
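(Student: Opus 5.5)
The proof has three parts: bound the seed length, reduce an arbitrary $f$ to its indecomposable core, and combine \cref{lem:reduction} with \cref{thm:indecomposable implies equidistribution}.

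\textbf{Seed length.} The seed is $S = T_1\times T_2\times\FF_q^\ell\times\FF_q\times\FF_q$.
\begin{itemize}
\item By \cref{thm:HSG for polys construction} with $k=\Theta(d/\log q+1)$, the set $T_1$ costs $O(d\log n + k\log q)=O(d\log n + d+\log q)$ bits.
\item By \cref{thm:HSG construction} with $\delta_2=cd/q$, the set $T_2$ costs $O(d\log n+\log q)$ bits.
\item The factor $\FF_q^{\ell+2}$ costs $(\ell+2)\log q = O(\log d\cdot\log q)$ bits.
\end{itemize}
The last term is not obviously $O(d\log n+\log q)$. Since $k=2^\ell\le 2\lceil d/\log q\rceil+2$, we have $\ell\le \log(d/\log q)+O(1)$. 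The function $\log q\cdot\log(d/\log q)$ is at most $O(d)$, because $x\log(d/x)\le d/e$. Hence $\ell\log q=O(d+\log q)$, which gives total seed length $O(d\log n+\log q)$.

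\textbf{Reduction to indecomposable polynomials.} Let $f$ have degree at most $d$. If $f$ is constant the claim is trivial. Otherwise write $f=g\circ h$ with $h$ indecomposable, $\deg h\le d$, and $g\in\FF_q[z]$ univariate; take $h$ of minimal positive degree among such decompositions. Since applying $g$ is a fixed function, total variation distance cannot increase under it. So it suffices to show that $h(\mathbf{U}_{\FF_q^{n+1}})$ and $h(G(\mathbf{U}_S))$ are each $\eps/2$-close to $\mathbf{U}_{\FF_q}$.

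\textbf{The uniform input.} \Cref{thm:indecomposable implies equidistribution} gives that $h(\mathbf{U}_{\FF_q^{n+1}})$ is $cd^2/\sqrt q$-close to uniform.

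\textbf{The pseudorandom input.} Condition on $(r,s)$. We have $h(G(r,s,\mathbf{t},u,v)) = F_{r,s}(v,u,\mathbf{t})$, where $F=h(\mathbf{b}(\mathbf{w})x+\mathbf{a}y,y)$ is evaluated at $x=v$, $y=u$, $\mathbf{w}=\mathbf{t}$. The variables $(v,u,\mathbf{t})$ are uniform on $\FF_q^{\ell+2}$. By \cref{lem:reduction}, $F$ is indecomposable except with probability
\[
\delta_2+(2^{d-1}-1)\delta_1\le cd/q + c\,d\,2^{d-1}/(2^d q)=O(d/q).
\]
The bound on the $\delta_1$ term uses $\delta_1\le c\,d/(2^d q)$, which holds because $q^k\ge q\cdot q^{d/\log q}=q\,2^d$; this is exactly why $k$ was chosen as it was.

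Whenever $F$ is indecomposable, its degree is at most $d\cdot\ell$, since each $b_i$ has degree at most $\ell$ and is multiplied by $x$, giving degree at most $\ell+1$. Applying \cref{thm:indecomposable implies equidistribution} in the $\ell+2$ variables then gives distance $O((d\ell)^2/\sqrt q)=O((d\log d)^2/\sqrt q)$, using $\ell=O(\log d)$.

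Averaging over $(r,s)$, the total error is
\[
O(d/q)+O\bigl((d\log d)^2/\sqrt q\bigr)+O(d^2/\sqrt q).
\]
The hypothesis $q\ge C(d\log d)^4/\eps^2$ makes this at most $\eps$ for large $C$; the $d/q$ term is dominated.

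\textbf{Remaining checks.} The main subtlety is making sure the degree bound on $\mathbf{p}$ really is $O(\log d)$ and that $\ell=O(\log d)$. This holds because $k\le 2d+2$, and it is what produces the $(d\log d)^4$ requirement on $q$.

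One should also confirm that \cref{lem:reduction} is applicable to $h$. It requires $h$ to be indecomposable with $\deg h\le d$, and the characteristic condition $\operatorname{char}\FF_q\ge d(d-1)+1$ carries over since $\deg h\le d$. Finally, \cref{lem:reduction} treats an $(n+1)$-variate $f$ with $y$ as the last variable, and the last coordinate of $G$ is $u$, which matches.
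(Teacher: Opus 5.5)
Your proposal is correct and follows the paper's proof essentially step for step: reduce to an indecomposable $h$, use \cref{lem:reduction} to make $F$ indecomposable except with probability $O(d/q)$, and apply \cref{thm:indecomposable implies equidistribution} to both $F$ (of degree $O(d\log d)$) and $h$. You also justify two steps the paper only asserts, namely $\ell\log q = O(d+\log q)$ and $\delta_1 \le c\,d/(2^d q)$. One small fix, which the paper shares and which is harmless: the correct bound is $\deg F \le d(\ell+1)$ rather than $d\ell$, as your own remark that $b_i x$ has degree at most $\ell+1$ already shows.
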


\begin{proof} 
Let $C = 4 \cdot 16^2\cdot c^2$ for the absolute constant $c$ defined earlier.
Let $f\in\FF_q[\mathbf{x}, y]$ be a polynomial of degree at most $d$.
We want to prove that $f(G(\mathbf{U}_S))$ and $f(\mathbf{U}_{\FF_q^{n+1}})$ are $\varepsilon$-close in statistical distance.
We may assume that $f$ is non constant, since the claim is trivial otherwise.

Our first step is the same as in \cite{DV22} and \cite{dwivedi2024optimalpseudorandomgeneratorslowdegree}: $f$ can always be written in the form $f=g \circ h$, where $g\in\FF_q[z]$ is a univariate polynomial and $h\in\FF_q[\mathbf{x},y]$ is indecomposable over $\FF_q$. Let $D=h(G(\mathbf{U}_S))$ and $D'=h(\mathbf{U}_{\FF_q^{n+1}})$. Then $f(G(\mathbf{U}_S))=g(D)$ and $f(\mathbf{U}_{\FF_q^{n+1}})=g(D')$. If $D$ and $D'$ are $\eps$-close, then $g(D)$ and $g(D')$ are also $\eps$-close. Thus, by replacing $f$ with $h$, we may assume that $f$ is indecomposable over $\FF_q$.

Let $r,s,\mathbf{a},\mathbf{b}$ and $F$ be as in \cref{lem:reduction}. Then, by \cref{lem:reduction}, the probability that $F$ is decomposable over $\FF_q$ over a random choice of $r$ and $s$ is at most $2^{d-1}\delta_1 + \delta_2 \leq 2c \cdot d/q$.
Fix $r$ and $s$ such that $F$ is indecomposable over $\FF_q$ and note that by definition,
\[
    f(G(r,s,\mathbf{t},u,v))=F(v,u,\mathbf{t}).
\] 
Applying \cref{thm:indecomposable implies equidistribution} to $F$ shows that, for such fixed $r$ and $s$, the distribution of $F(\mathbf{t},u,v)$, i.e., $f(G(r,s,\mathbf{t},u,v))$, over random $t_i,u,v\in\FF_q$ is $\eps'$-close to $\mathbf{U}_{\FF_q}$, where $\eps'\leq c \cdot  (\deg F)^2/\sqrt{q}$. Notice that
\[
    \deg F  \leq d \cdot \ell \leq d \cdot 4\log d.
\]
This implies that the statistical distance between $f(G(\mathbf{U}_S))$ and $\mathbf{U}_{\FF_q}$ is at most
$$
2c \cdot \frac{d}{q} + 16c \cdot \frac{(d \log d)^2}{\sqrt{q}} \leq \varepsilon/2.
$$

On the other hand, as $f$ is also indecomposable over $\FF_q$, applying \cref{thm:indecomposable implies equidistribution} to $f$ shows that $f(\mathbf{U}_{\FF_q^{n+1}})$ is $\varepsilon'$-close to $\mathbf{U}_{\FF_q}$, where $\varepsilon' = c \cdot d^2 / \sqrt{q} \leq \varepsilon/2$.
Therefore, the statistical distance between $f(G(\mathbf{U}_S))$ and $f(\mathbf{U}_{\FF_q^{n+1}})$ is at most $\varepsilon$.

The seed length of $G$ is
\[
\log|T_1| + \log|T_2| + \ell \log q + 2 \log q =O(d\log n+\log q),
\]
which completes the proof. 
\end{proof}

%% file: reduction.tex
\section{An Approach Towards Smaller Fields}\label{sec:f2}

For $q=p^{a}$, $p$ prime, denote by $\Tr_{q \rightarrow p} \colon \FF_q \rightarrow \FF_p$ the absolute field trace.

Assume that we are given a PRG $G \colon S \rightarrow \FF_q^n$ for $n$-variate polynomials of total degree at most $d$ over $\FF_q$ with error $\eps$. Importantly, assume that $q$ must be such that
\[
q \ge \tau(d,\eps)
\]
for some threshold function $\tau$. (In our construction, $\tau = \frac{C(d\log d)^4}{\eps^2}$ for some universal constant $C$.) Also, assume that we have no lower bound on the characteristic $p$ (which is not the case for our construction). A natural attempt to construct a PRG for polynomials over smaller fields is to take traces, namely,
$
G' \colon S \rightarrow \FF_p^n,
$
where
\[
G'(s) = (\Tr_{q \rightarrow p} \circ G)(s) = \left( \Tr_{q \rightarrow p}(G(s)_1),\ldots,\Tr_{q \rightarrow p}(G(s)_n) \right).
\]
It turns out that this simple approach works, as long as $\tau$ mild enough! For concreteness, we fix $\eps > 0$ to some constant, and concentrate on the dependence on $d$. Moreover, we assume that our ``base'' PRG $G$ has a seed of length $O(d^{O(1)}\log n + \log q)$, but the proof can easily be adapted to handle other seed lengths.
\begin{proposition}\label{prop:main-reduction}
Fix some constant $\eps_0 \in (0,1)$. Assume that for any $n,d,q$ such that 
$q \ge \tau(d,\eps_0) = \tau_0(d) = d^{1-\eta}$ for some $\eta \in (0,1)$,
there exists an explicit PRG for $n$-variate polynomials of total degree at most $d$ over $\FF_q$, with error $\eps_0$, and seed length $O(d^{c}\log n)$, where $c$ is some absolute constant.

Then, for any $n,d,p$ where $p \le d$ is prime, there exists an explicit PRG for $n$-variate polynomials of total degree at most $d$ over $\FF_p$, with error $\eps_0$, and seed length $O((d/p)^{O(1/\eta)}\log n)$.
\end{proposition}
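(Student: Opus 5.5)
The construction is the trace map $G' = \Tr_{q\to p}\circ G$, where $G$ is the hypothesized PRG over $\FF_q$ with $q=p^a$. The plan is to lift any degree-$d$ polynomial over $\FF_p$ to a polynomial over $\FF_q$ of not much larger degree, and then choose $a$ so that $q$ clears the threshold $\tau_0$ for that larger degree.

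\textbf{Lifting.} Let $f\in\FF_p[x_1,\ldots,x_n]$ have total degree at most $d$. Write $\Tr(y)=y+y^p+\cdots+y^{p^{a-1}}$ for $y\in\FF_q$; this is a polynomial of degree $p^{a-1}$. Set
\[
F(y_1,\ldots,y_n) := f\bigl(\Tr(y_1),\ldots,\Tr(y_n)\bigr)\in\FF_q[y_1,\ldots,y_n].
\]
It has total degree at most $D:=d\,p^{a-1}$, and $F(G(s))=f(G'(s))$ pointwise.

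\textbf{Distribution under uniform input.} The trace $\Tr\colon\FF_q\to\FF_p$ is $\FF_p$-linear and surjective, so each fiber has exactly $q/p$ elements. Hence if $\mathbf{y}\sim\mathbf{U}_{\FF_q^n}$, then $\Tr(\mathbf{y})$ is exactly uniform on $\FF_p^n$. Therefore $F(\mathbf{U}_{\FF_q^n})$ has the same distribution as $f(\mathbf{U}_{\FF_p^n})$. Combining this with the pointwise identity, the error of $G'$ against $f$ equals the error of $G$ against $F$, which is at most $\eps_0$ as long as $G$ fools degree-$D$ polynomials over $\FF_q$.

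\textbf{Choice of $a$.} We need $q\ge\tau_0(D)=D^{1-\eta}$, that is,
\[
p^a\ge (d\,p^{a-1})^{1-\eta} \iff p^{a\eta}\ge d^{1-\eta}p^{-(1-\eta)}.
\]
It suffices to take $p^{a\eta}\ge (d/p)^{1-\eta}$, i.e., $a\approx \lceil \log_p(d/p)\cdot(1-\eta)/\eta\rceil$ (or $a=1$ if this is already satisfied). With this choice $p^{a}\le p\,(d/p)^{1/\eta}$, so
\[
D=d\,p^{a-1}\le d\,(d/p)^{1/\eta}.
\]
The resulting seed length is $O(D^c\log n)$. Note that the absence of a characteristic restriction is exactly what allows us to apply the hypothesis over $\FF_q$ with $\operatorname{char}(\FF_q)=p$ small.

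\textbf{Main obstacle.} The delicate step is the bookkeeping that turns $D^c$ into the claimed form $(d/p)^{O(1/\eta)}$: as written, there is an extra factor of $d$ in $D$. I would handle this by absorbing it. In the regime of interest ($p$ constant, e.g.\ $p=2$), $d\le (d/p)^{O(1)}$, so the exponent stays $O(1/\eta)$. If that does not suffice in general, I would instead pick $a$ minimal and note that $D\le (d/p)^{O(1/\eta)}\cdot p^{O(1)}$ for $p\le d$. Explicitness remains to be checked, but it is routine: computing traces in $\FF_q$, with $\log q=O(a\log p)$ polynomial in $d$, takes polynomial time, provided an explicit representation of $\FF_q$ is available (or is obtained as in \cref{thm:algorithm_constructing_E}).
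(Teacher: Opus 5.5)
Your proposal is correct and matches the paper's proof essentially step for step: compose $G$ with the absolute trace, bound the degree of $f\circ\Tr_{q\to p}$ by $dq/p$, and take $q$ to be the least power of $p$ with $q \ge (d/p)^{(1-\eta)/\eta}$. Your explicit check that the trace pushes $\mathbf{U}_{\FF_q^n}$ to $\mathbf{U}_{\FF_p^n}$ fills in a step the paper leaves implicit, and the leftover factor you flag as the main obstacle is the same extra factor of $p$ that the paper's footnote dismisses; as you say, it is absorbed into $(d/p)^{O(1/\eta)}$ only when $p$ is at most polynomial in $d/p$, e.g.\ constant $p$ as in the $\FF_2$ application.
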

\begin{proof}
Let $f \in \FF_{p}^{\le d}[x_1,\ldots,x_n]$. Let $q$ be a power of $p$ soon to be determined.
Since $\deg(\Tr_{q \rightarrow p})=q/p$, we have that $h = f \circ \Tr_{q \rightarrow p} \colon \FF_q^n \rightarrow \FF_p$, where we apply traces to each field element individually, has degree at most $d'=(dq)/p$ as a polynomial over $\FF_q$. Notice that
\[
f(G'(s)) = h(G(s)),
\]
so $G'$ fools $f$ with error $\eps_0$ whenever $q \ge \tau_0((dq)/p)$. This amounts to
\[
q \ge \left( \frac{d}{p} \right)^{\frac{1-\eta}{\eta}}.
\]
Invoking $G$ with a suitable $q$,\footnote{More precisely, we need $q$ to be a power of $p$, but the first such $q$ satisfies $q \le p \cdot (d/p)^{(1-\eta)/\eta}$, and the extra $p$ factor will not change the parameters.} and degree $d'$, the seed length becomes 
\[
O(d'\log n + \log q) = O\!\left( \!\left( \frac{d}{p} \!\right)^{O(1/\eta)} \cdot \log n \!\right),
\]
as desired. 
\end{proof}

In particular, if such a PRG $G$ exists with any constant $\eta$, then we would get a PRG for $\FF_2$-polynomials with seed length $d^{O(1)} \cdot \log n$, beating Viola's PRG \cite{Vio09} in the regime where $d=\Omega(\log\log n)$.